\newtheorem{theorem}{Theorem}
\def \calM{{\mathcal{M}}}
\def \calA{{\mathcal{A}}}
\def \epsdelta{{\epsilon, \delta}}
\def\eg{\emph{e.g.}}
\def\ie{\emph{i.e.}}
\def\cf{\emph{cf.}}
\def\epslb{\epsilon_{emp}}
\def\epsopt{\epsilon_{OPT}}
\def\setOne{\mathbb{1}}
\begin{document}
\sloppy

\title[Revealing the True Cost of LDP Protocols: An Auditing Perspective]{Revealing the True Cost of Locally Differentially Private Protocols: An Auditing Perspective}
%


\author{Héber H. Arcolezi}
\orcid{0000-0001-8059-7094}
\affiliation{
  \institution{Inria Centre at the University Grenoble Alpes}
  \city{} 
  \state{} 
  \country{France}
}
\email{heber.hwang-arcolezi@inria.fr}

\author{Sébastien Gambs}
\orcid{0000-0002-7326-7377}
\affiliation{
  \institution{Université du Québec à Montréal (UQAM)}
  \city{} 
  \state{} 
  \country{Canada}
}
\email{gambs.sebastien@uqam.ca}


\renewcommand{\shortauthors}{H.H. Arcolezi \& S. Gambs}

\begin{abstract}
While the existing literature on Differential Privacy (DP) auditing predominantly focuses on the centralized model (\eg, in auditing the DP-SGD algorithm), we advocate for extending this approach to audit Local DP (LDP).
To achieve this, we introduce the LDP-Auditor framework for empirically estimating the privacy loss of locally differentially private mechanisms. 
This approach leverages recent advances in designing privacy attacks against LDP frequency estimation protocols. 
More precisely, through the analysis of numerous state-of-the-art LDP protocols, we extensively explore the factors influencing the privacy audit, such as the impact of different encoding and perturbation functions. 
Additionally, we investigate the influence of the domain size and the theoretical privacy loss parameters $\epsilon$ and $\delta$ on local privacy estimation.
In-depth case studies are also conducted to explore specific aspects of LDP auditing, including distinguishability attacks on LDP protocols for longitudinal studies and multidimensional data.
Finally, we present a notable achievement of our LDP-Auditor framework, which is the discovery of a bug in a state-of-the-art LDP Python package.
Overall, our LDP-Auditor framework as well as our study offer valuable insights into the sources of randomness and information loss in LDP protocols. 
These contributions collectively provide a realistic understanding of the local privacy loss, which can help practitioners in selecting the LDP mechanism and privacy parameters that best align with their specific requirements.
We open-sourced LDP-Auditor in~\cite{artifact_ldp_audit}.
\end{abstract}
\keywords{Local differential privacy, Privacy auditing, Privacy attacks.}

\maketitle

\section{Introduction} 
\label{sec:introduction}

Differential Privacy (DP)~\cite{Dwork2006} is now widely recognized as the gold standard for providing formal guarantees on the privacy level achieved by an algorithm. 
One of its extension, known as Local DP (LDP)~\cite{first_ldp, Duchi2013}, aims at tackling the trust challenges associated with relying on a centralized server, such as those highlighted by various data breaches~\cite{data_breaches} and instances of data misuse~\cite{cambridge}. 
In LDP, each user perturbs their own data locally before sharing it with a data aggregator or a central server. 
The fundamental idea behind LDP is to introduce carefully calibrated noise to the data to ensure individual privacy guarantees while allowing meaningful statistical analysis to be performed on the aggregated noisy data.

Formally, a randomized algorithm $\calM$ satisfies ($\epsdelta$)-local differential privacy (($\epsdelta$)-LDP), for $\epsilon \geq 0$ and $0 \leq \delta \leq 1$, if for any pair of input values $v_1, v_2 \in \mathrm{Domain}(\calM)$ and all possible sets of outputs $O \subseteq \mathrm{Range}(\calM)$, the following inequality holds:
\begin{equation} \label{eq:ldp}
    \Pr[\calM(v_1) \in O]  \leq e^\epsilon \cdot \Pr[\calM(v_2) \in O] + \delta  \textrm{.}
\end{equation}

In particular, ($\epsdelta$)-LDP is also called approximate LDP, with the special case of $\delta=0$ being called pure $\epsilon$-LDP.
On the one hand, an (L)DP mechanism is accompanied by the mathematical proof in Equation~\eqref{eq:ldp} that establishes a \textbf{theoretical upper bound} for the privacy loss, represented by the privacy parameters $\epsilon$ and $\delta$.
In particular, lower values of $\epsilon$ indicate stronger privacy guarantees.  
On the other hand, the recent and emerging field of DP auditing (\eg, see~\cite{Jagielski2020,nasr2021adversary,Lu2022,tramer2022debugging,maddock2022canife,steinke2023privacy,andrew2023one,pillutla2023unleashing,nasr2023tight,cebere2024tighter,kazmi2024panoramia}) aims at estimating an \textbf{empirical lower bound} for the privacy loss, denoted as $\epslb$.

The role of DP auditing is crucial because it bridges the gap between theoretical guarantees and practical implementations, especially when theoretical bounds on privacy loss might be overly pessimistic or not sufficiently tight (\eg, as in Differentially Private Stochastic Gradient Descent -- DP-SGD~\cite{abadi2016deep}).
In other words, DP auditing helps in understanding how well privacy-preserving mechanisms perform under different conditions and attack scenarios~\cite{nasr2021adversary}.
Furthermore, auditing can uncover potential vulnerabilities or flaws in the implementation that might not be apparent through theoretical analysis alone~\cite{Turati2023,tramer2022debugging,ding2018detecting}.
From a practical standpoint, the empirical estimation of the privacy loss through realistic attackers can also help practitioners make informed decisions and understand the implications of specific privacy parameter choices.
These instances underscore the significance of empirically estimating and verifying the claimed privacy levels of (L)DP mechanisms.

\subsection{Our Contributions}

With these motivations in mind, in this paper, we introduce the LDP-Auditor framework, which is designed to audit LDP frequency estimation protocols and estimate their empirical privacy loss. 
Frequency (or histogram) estimation is a primary objective of LDP as it is a building block for more complex tasks.
\emph{\textit{This means our audit results are applicable and relevant to numerous tasks under LDP guarantees}}, such as heavy hitter estimation~\cite{Bassily2015,Wang2021}, joint distribution estimation~\cite{kikuchi2022castell,Filho2023,Ren2018,Zhang2018}, frequent item-set mining~\cite{Wang2018,Wu2023}, machine learning~\cite{Chamikara2020,Yilmaz2020}, frequency estimation of multidimensional data~\cite{wang2019,nguyen2016collecting,Arcolezi_rs_fd} and frequency monitoring~\cite{Arcolezi2023evolving,Arcolezi2022,rappor,microsoft,Vidal2020}.

More precisely, LDP-Auditor relies on Monte Carlo methods to estimate the probabilities $\hat{p}_0=\Pr[\calM(v_1) \in O]$ and $\hat{p}_1=\Pr[\calM(v_2) \in O]$ from Equation~\eqref{eq:ldp} through attacks. 
From this, an empirical privacy loss is computed, $\epslb=\ln\left(\left(\hat{p}_0 - \delta\right)/\hat{p}_1\right)$, thus providing an estimate of the algorithm's privacy leakage. 
A comprehensive discussion on the LDP-Auditor framework, including its detailed methodology and applications, is deferred to Section~\ref{sec:ldp_auditing}.

Unlike traditional DP-SGD auditing, in which the focus is on distinguishing neighboring datasets, LDP-Auditor assesses the distinguishability of inputs directly. 
To achieve this, we instantiate LDP-Auditor with distinguishability attacks based on recent adversarial analysis of LDP frequency estimation protocols~\cite{Gursoy2022, Arcolezi2023}. 
These attacks allow an adversary's to predict the user's input value based on the obfuscated output, enabling LDP-Auditor to directly evaluate the privacy guarantees offered by LDP mechanisms, making it well-suited for privacy auditing.
In this context, expanding beyond~\cite{Gursoy2022, Arcolezi2023}, we also introduce novel distinguishability attacks tailored to four additional LDP frequency estimation protocols based on histogram encoding~\cite{tianhao2017}, as well as general distinguishability attacks on LDP protocols for longitudinal studies (see Algorithm~\ref{alg:attack_ldp_long}) and on LDP protocols for multidimensional data (see Algorithm~\ref{alg:attack_rs+fd}).

As an example, Figure~\ref{fig:summary_audit} illustrates an instance of our auditing results for a theoretical upper bound of $\epsilon=2$ (indicated by the dashed red line) across eight $\epsilon$-LDP frequency estimation protocols: Generalized Randomized Response (GRR)~\cite{kairouz2016discrete}, Subset Selection (SS)~\cite{wang2016mutual,Min2018}, Symmetric Unary Encoding (SUE)~\cite{rappor}, Optimal Unary Encoding (OUE)~\cite{tianhao2017}, Thresholding with Histogram Encoding (THE)~\cite{tianhao2017}, Summation with Histogram Encoding (SHE)~\cite{Dwork2006}, Binary Local Hashing (BLH)~\cite{Bassily2015} and Optimal Local Hashing (OLH)~\cite{tianhao2017}.
Among all these protocols, GRR demonstrated a tight empirical privacy loss estimation for $\epslb$ as it does not require a specific encoding. 
On the other hand, other LDP protocols presented $\epslb$ within $\leq 2$x of the theoretical $\epsilon$ (such as SUE, THE and SHE), and even within $\leq 4$x of the theoretical $\epsilon$ (like BLH).
\textit{These results indicate that either the state-of-the-art attacks are still not representative of the worst-case scenario or that the upper bound analyses of these LDP protocols are not tight.
The latter assumption might occur for LDP protocols that incorporate sources of randomness (\eg, due to hashing~\cite{Hadamard,Bassily2015,tianhao2017,apple})} not captured in the worst-case definition of LDP in Equation~\eqref{eq:ldp}.

More specifically, \textit{\textbf{we have investigated several factors influencing the audit}}, including the effect of theoretical privacy loss parameters ($\epsilon$ and $\delta$) in low, mid and high privacy regimes as well as the impact of the domain size $k$ on local privacy estimation. 
\textbf{\textit{Our investigation included detailed case studies to further explore specific facets of LDP auditing.}}
Notably, our analysis assessed how variations in $\delta$ affect the empirical privacy loss, $\epslb$, for approximate LDP variants~\cite{Wang2021_approx_ldp} of the GRR, SUE, BLH and OLH protocols, alongside with the Gaussian Mechanism (GM)~\cite{dwork2014algorithmic} and the Analytic GM (AGM)~\cite{balle18a}.
Moreover, given that BLH exhibited the least tight empirical privacy loss estimation $\epslb$, we investigated the privacy loss of local hashing without LDP obfuscation.
In addition, we examined the degradation of the empirical local privacy loss in repeated data collections compared to the theoretical upper bound imposed by the (L)DP sequential composition~\cite{dwork2014algorithmic}.
In this context, within a generic framework, we proposed distinguishability attacks on LDP protocols in \textit{longitudinal studies} (\cf{} Algorithm~\ref{alg:attack_ldp_long}).
Furthermore, we addressed the case of \textit{multidimensional data}, proposing distinguishability attacks for LDP protocols following the RS+FD~\cite{Arcolezi_rs_fd} solution (\cf{} Algorithm~\ref{alg:attack_rs+fd}).
We also show how \textit{LDP-Auditor successfully identified a bug in one state-of-the-art LDP Python package}, in which the empirical privacy loss $\epslb$ contradicts the theoretical upper bound $\epsilon$ (see Figure~\ref{fig:audit_pure_ldp}).

\begin{figure}[t]
    \centering
    \includegraphics[width=0.7\linewidth]{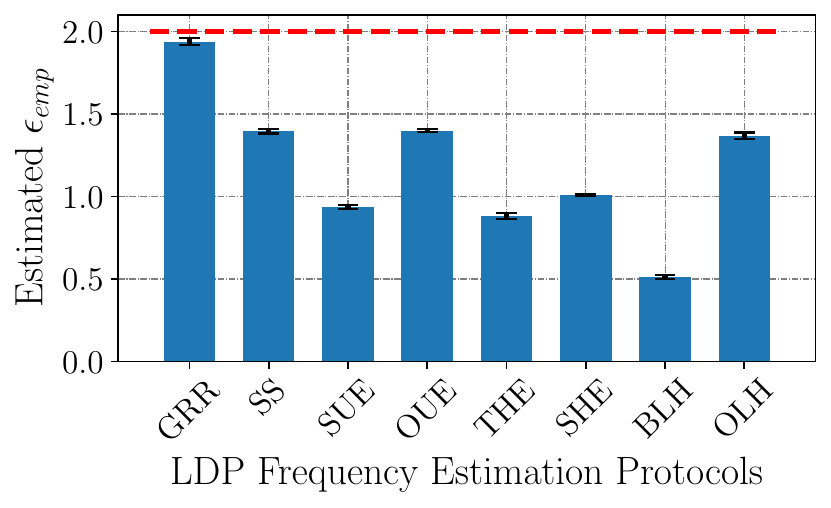}
    \caption{Comparison of estimated privacy loss $\epslb$ with theoretical upper bound $\epsilon=2$ for eight pure LDP frequency estimation protocols. 
    The dashed red line corresponds to the certifiable upper bound. 
    While GRR closely aligns with the theoretical bound, others exhibit empirical $\epslb$ within $\leq 2$x (\eg, SUE) or even $\leq 4$x (\ie, BLH) of the theoretical $\epsilon$ value.}
    \label{fig:summary_audit}
\end{figure}

Taking all these aspects into account, the coverage of our analysis is broadened, allowing for a more comprehensive assessment of the robustness of various LDP protocols in realistic data collection scenarios.
More specifically, our main contributions in this paper can be summarized as follows:

\begin{itemize}
    \item We introduce the LDP-Auditor framework, which aims to estimate the empirical privacy loss of LDP frequency estimation protocols.
     This framework provides a realistic assessment of privacy guarantees, which is essential for making informed decisions about LDP parameter selection and on stimulating the research of new privacy attacks. 

    \item We introduce novel distinguishability attacks specifically tailored to LDP protocols for longitudinal studies and multidimensional data. 
    These new attacks enrich the privacy analysis techniques available for examining the robustness of LDP mechanisms in practical settings.
    
    \item We conduct an extensive audit of various LDP protocols, analyzing the impact of factors such as privacy regimes, domain size and multiple data collections.
    This comprehensive analysis provides valuable insights into the resilience and effectiveness of nine state-of-the-art LDP mechanisms, fundamental building blocks for applications such as frequency monitoring~\cite{Arcolezi2023evolving,rappor,microsoft,Vidal2020}, heavy hitter estimation~\cite{Bassily2015,Wang2021} and machine learning~\cite{Chamikara2020,Yilmaz2020}.
    
    \item We demonstrate the bug detection capabilities of LDP-Auditor by identifying an issue in a state-of-the-art LDP Python package.
    This highlights the practical significance of our framework in validating LDP implementations.
\end{itemize}


\section{Related Work} 
\label{sec:rel_work}

Differential privacy auditing, as introduced by~\citet{Jagielski2020}, involves employing various techniques to empirically assess the extent of privacy leakage in machine learning algorithms through estimating the $\epslb$ privacy loss. 
These techniques are particularly valuable when known analytical bounds on the DP loss lack precision, allowing for empirical measurements of privacy in such cases.
For instance, DP auditing has been extensively investigated in evaluating the mathematical analysis for the well-known DP-SGD algorithm proposed by~\citet{abadi2016deep}. 
The research literature on DP-SGD auditing covers both centralized~\cite{Jagielski2020,Lu2022,steinke2023privacy,pillutla2023unleashing,nasr2023tight,nasr2021adversary,cebere2024tighter,tramer2022debugging} and federated~\cite{maddock2022canife,andrew2023one} learning settings.
Beyond privacy-preserving machine learning, privacy auditing has also been studied for standard DP algorithms~\cite{ding2018detecting,bichsel2021dpsniper,gorla2022possibility,Askin2022,Yun_Lu2022}.
For instance, some of these works consider a fully black-box scenario (\ie, unknown DP mechanism) with the goal of estimating the $\epsilon$-(L)DP guarantee provided~\cite{gorla2022possibility,Askin2022,Yun_Lu2022}.
Another line of research~\cite{ding2018detecting,bichsel2021dpsniper} has been tailored to identify errors in algorithm analysis or code implementations, especially when derived lower bounds contradict theoretical upper bounds.
While the works in~\cite{ding2018detecting,bichsel2021dpsniper} could also be used to certify the $\epsilon$-LDP guarantee through Monte Carlo estimations, our work considers realistic privacy attacks to LDP mechanisms to empirically estimate the privacy loss $\epslb$.
In other words, they would be able to answer ``\textit{is the claimed $\epsilon$-LDP correct in this code implementation?}'', whereas we alternatively answer ``\textit{is the claimed $\epsilon$-LDP worst-case guarantee tight under state-of-the-art attacks?}''.

This distinction highlights our emphasis on assessing the tightness of privacy guarantees under stringent adversarial conditions. 
Consequently, we envision our auditing analysis as an stimulus for advancing the current state-of-the-art in privacy attacks on LDP protocols and achieve tight empirical estimates for $\epslb$.
In this context, the existing literature on privacy attacks on LDP comprises several categories: 
(1) Distinguishability attacks~\cite{Gursoy2022,Arcolezi2023,Chatzikokolakis2023} (adopted in this work), which enable adversaries to predict the users' input based on the obfuscated outputs; 
(2) Pool inference attacks~\cite{Gadotti2022}, allowing adversaries to deduce a user's preferences or attributes from the aggregated data, such as inferring a user's preferred skin tone used in emojis; 
(3) Re-identification attacks~\cite{Murakami2021,Arcolezi2023}, aiming to uniquely identify a specific user within a larger population; 
and (iv) Attacks on iterative data collections~\cite{Arcolezi2023evolving,GURSOY2024}, which allows adversaries to detect a pattern change in longitudinal studies, such as when someone starts a diet by monitoring calorie consumption.

\section{LDP Frequency Estimation Protocols} 
\label{sec:background}

In this section, we review the necessary notation (\cf{} Table~\ref{tab:notation} in Appendix~\ref{app:notation}) and background information of the LDP frequency estimation protocols.
Throughout the paper, let $[n]=\{1, 2, \ldots, n\}$ denote a set of integers and $V = \{v_1, \ldots, v_k\}$ represent a sensitive attribute with a discrete domain of size $k = |V|$. 
We consider a distributed setting with $n$ users and one untrusted server collecting the data reported by these users.
The fundamental premise of ($\epsdelta$)-LDP, as stated in Equation~\eqref{eq:ldp}, is that the input to $\calM$ cannot be confidently determined from its output, with the level of confidence determined by $e^{\epsilon}$ and $\delta$. 
Therefore, the user's privacy is considered compromised if the adversary can correctly predict the user's value.

In recent works~\cite{Gursoy2022, Arcolezi2023}, the authors introduced \textbf{distinguishability attacks} $\calA$ to state-of-the-art LDP frequency estimation protocols.
These attacks enable an adversary to predict the users' value $\hat{v}=\calA(y)$, in which $y=\calM(v)$ represents the reported value obtained through the $\epsilon$-LDP protocol.
In essence, although each LDP protocol employs different encoding and perturbation functions, the adversary's objective remains the same, namely to predict the user's true value by identifying the most likely value that would have resulted in the reported value $y$.
The notion of distinguishability attacks provides a unified approach to evaluate the privacy guarantees offered by different LDP protocols.

We now provide a brief overview of state-of-the-art pure and approximate LDP frequency estimation protocols $\calM$, along with their respective distinguishability attacks denoted as $\calA_{\calM}$.
The attack $\calA_{\calM}$ generally relies on a ``support set''~\cite{tianhao2017}, denoted as $\setOne_{\calM}$, which is built upon the reported value $y$.
The combination of these protocols and attack strategies will enable us to comprehensively audit the empirical privacy level provided by various LDP mechanisms.

\subsection{Pure $\epsilon$-LDP Protocols} \label{sub:pure_ldp_protocols}

\textbf{Generalized Randomized Response (GRR).} The GRR~\cite{kairouz2016discrete} mechanism generalizes the randomized response surveying technique proposed by~\citet{Warner1965} for $k \geq 2$ while satisfying $\epsilon$-LDP.
Given a value $v \in V$, $\mathrm{GRR}(v)$ outputs the true value $v$ with probability $p$, and any other value $v' \in V \setminus \{v\}$, otherwise. 
More formally:
\begin{equation} \label{eq:grr}
    \Pr[\mathrm{GRR}(v)=y] = \begin{cases} p=\frac{e^{\epsilon}}{e^{\epsilon}+k-1} \textrm{ if } y = v,\\ q=\frac{1}{e^{\epsilon}+k-1} \textrm{ if } y \neq v \textrm{,} \end{cases}
\end{equation}

\noindent in which $y \in V$ is the perturbed value sent to the server. 
The support set for GRR is simply $\setOne_{\mathrm{GRR}}=\{y\}$.
From Equation~\eqref{eq:grr}, $\Pr[y=v] > \Pr[y=v']$ for all $v' \in V \setminus \{v\}$. 
Therefore, the attack strategy $\calA_{\mathrm{GRR}}$ is to predict $\hat{v}=y$~\cite{Gursoy2022,Arcolezi2023}.

\textbf{Subset Selection (SS).} The SS~\cite{wang2016mutual,Min2018} mechanism was proposed for the case in which the obfuscation output is a subset of values $\Omega$ of the original domain $V$.
The optimal subset size that minimizes the variance is $\omega= |\Omega| = \max \left (1, \left\lfloor \frac{k}{e^{\epsilon}+1} \right\rceil \right)$. 
Given an empty subset $\Omega$, the true value $v$ is added to $\Omega$ with probability $p=\frac{\omega e^{\epsilon}}{\omega e^{\epsilon} + k - \omega}$. 
Finally, values are added to $\Omega$ as follows:

\begin{itemize}
    \item If $v \in \Omega$, then $\omega - 1$ values are sampled from $V \setminus \{v\}$ uniformly at random (without replacement) and are added to $\Omega$;
    
    \item If $v \notin \Omega$, then $\omega$ values are sampled from $V \setminus \{v\}$ uniformly at random (without replacement) and are added to $\Omega$.
\end{itemize}

Afterward, the user sends the subset $\Omega$ to the server.
The support set for SS is the subset of all values in $\Omega$, \ie, $\setOne_{\mathrm{SS}}=\{v | v \in \Omega\}$.
Therefore, the attack strategy $\calA_{\mathrm{SS}}$ is to predict $\hat{v}=\mathrm{Uniform}\left( \setOne_{\mathrm{SS}} \right)$~\cite{Gursoy2022,Arcolezi2023}.

\textbf{Unary Encoding (UE).} UE protocols~\cite{rappor,tianhao2017} encode the user's input data $v \in V$, as a one-hot $k$-dimensional vector before obfuscating each bit independently.
More precisely, let $\textbf{v}=[0, \ldots, 0, 1, 0, \ldots, 0]$ be a binary vector with only the bit at the position $v$ set to $1$ while the other bits are set to $0$.
The obfuscation function of UE mechanisms randomizes the bits from $\textbf{v}$ independently to generate $\textbf{y}$ as follows:
\begin{equation}  \label{eq:ue_parameters}
    \forall{i \in [k]} : \quad \Pr[\textbf{y}_i=1] =\begin{cases} p, \textrm{ if } \textbf{v}_i=1 \textrm{,} \\ q, \textrm{ if } \textbf{v}_i=0 \textrm{,}\end{cases}
\end{equation}

\noindent in which $\textbf{y}$ is sent to the server. 
There are two variations of UE mechanisms: (i) Symmetric UE (SUE)~\cite{rappor} that selects $p=\frac{e^{\epsilon/2}}{e^{\epsilon/2}+1}$ and $q=\frac{1}{e^{\epsilon/2}+1}$ in Equation~\eqref{eq:ue_parameters}, such that $p+q=1$; and (ii) Optimal UE (OUE)~\cite{tianhao2017} that selects $p=\frac{1}{2}$ and $q=\frac{1}{e^{\epsilon}+1}$ in Equation~\eqref{eq:ue_parameters}.
With $\textbf{y}$, the adversary can construct the subset of all values $v \in V$ that are set to 1, \ie, $\setOne_{\mathrm{UE}}=\{v | \textbf{y}_v = 1\}$.
There are two possible attack strategies $\calA_{\mathrm{UE}}$~\cite{Gursoy2022,Arcolezi2023}:

\begin{itemize}
    \item $\calA^0_{\mathrm{UE}}$ is a random choice $\hat{v}=\mathrm{Uniform}\left( [k] \right)$, if $\setOne_{\mathrm{UE}}=\emptyset$;
    
    \item $\calA^1_{\mathrm{UE}}$ is a random choice $\hat{v}=\mathrm{Uniform}\left( \setOne_{\mathrm{UE}} \right)$, otherwise.
\end{itemize}

\textbf{Local Hashing (LH).} LH protocols~\cite{tianhao2017,Bassily2015} use hash functions to map the input data $v \in V$ to a new domain of size $g \geq 2$, and then apply GRR to the hashed value. 
Let $\mathscr{H}$ be a universal hash function family such that each hash function $\mathrm{H} \in \mathscr{H}$ hashes a value $v \in V$ into $[g]$ (\ie, $\mathrm{H} : V \rightarrow [g]$). 
There are two variations of LH mechanisms: (i) Binary LH (BLH)~\cite{Bassily2015} that just sets $g=2$, and (ii) Optimal LH (OLH)~\cite{tianhao2017} that selects $g=\lfloor e^{\epsilon} + 1 \rceil$.
Each user first selects a hash function $\mathrm{H} \in \mathscr{H}$ at random and obfuscates the hash value $h=\mathrm{H}(v)$ with GRR. 
In particular, the LH reporting mechanism is $\mathrm{LH}(v) \coloneqq \langle \mathrm{H}, \mathrm{GRR}(h) \rangle \textrm{,}$ in which $\mathrm{GRR}(h)$ is given in Equation~\eqref{eq:grr} while operating on the new domain $[g]$. 
Each user reports the hash function and obfuscated value $\langle \mathrm{H}, y \rangle$ to the server. 
With these elements, the adversary can construct the subset of all values $v \in V$ that hash to $y$, \ie, $\setOne_{\mathrm{LH}}= \{v | \mathrm{H}(v) = y\}$.
There are two possible attack strategies $\calA_{\mathrm{LH}}$~\cite{Gursoy2022,Arcolezi2023}:
\begin{itemize}
    \item $\calA^0_{\mathrm{LH}}$ is a random choice $\hat{v}=\mathrm{Uniform}\left( [k] \right)$, if $\setOne_{\mathrm{LH}}=\emptyset$;
    
    \item $\calA^1_{\mathrm{LH}}$ is a random choice $\hat{v}=\mathrm{Uniform}\left( \setOne_{\mathrm{LH}} \right)$, otherwise.
\end{itemize}

\textbf{Histogram Encoding (HE).} HE protocols~\cite{tianhao2017} encode the user value as a one-hot $k$-dimensional histogram, $\textbf{v}=[0.0, 0.0, \ldots, 1.0, 0.0, \ldots, 0.0]$ in which only the $v$-th component is $1.0$.
To satisfy $\epsilon$-LDP, $\mathrm{HE}(\textbf{v})$ perturbs each bit of $\textbf{v}$ independently using the Laplace mechanism~\cite{Dwork2006}. 
Two different input values $v_1,v_2 \in V$ will result in two vectors with L1 distance of $\Delta_1=2$. 
Thus, HE will output $\textbf{y}$ such that $\textbf{y}_i = \textbf{v}_i + \textrm{Lap}\left( \frac{\Delta_1}{\epsilon} \right)$.
\emph{In this paper, we propose distinguishability attacks on two pure $\epsilon$-LDP HE protocols:}

\begin{itemize}
\item \textbf{Summation with HE (SHE)}~\cite{Dwork2006}. With SHE, there is no post-processing of $\textbf{y}$.
    Instead of constructing a support set, we describe our attacking strategy to SHE as follows.
    Let $P_V(v)$ be the prior probability of input value $v$, and let $P_Y(\textbf{y}|v)$ be the likelihood of observing $\textbf{y}$ given the true input value $v$. 
    By the Bayes' theorem, the posterior probability of input value $v$ given the observed $\textbf{y}$ is:
    \begin{equation} \label{eq:prior_bayes}
    P_V(v|\textbf{y}) = \frac{P_Y(\textbf{y}|v)P_V(v)}{\sum_{i=1}^{k}P_Y(\textbf{y}|i)P_V(i)} \mathrm{.}
    \end{equation}
    
    We can compute the likelihood $P_Y(\textbf{y}|v)$ as follows. 
    For a given $v$, the corresponding one-hot encoded histogram is $\textbf{v}$. 
    The reported value $\textbf{y}$ is the sum of $\textbf{v}$ and noise from a Laplace distribution with scale $b=2/\epsilon$. 
    Therefore, the likelihood of observing $\mathbf{y}$ given $\mathbf{v}$ is:
    
    \begin{equation} \label{eq:likelihood_laplace}
    P_Y(\textbf{y}|\textbf{v}) = \frac{1}{(2b)^k} \exp\left(-\frac{|\textbf{y}-\textbf{v}|_1}{b}\right) \mathrm{,}
    \end{equation}
    
    \noindent in which $|\textbf{y}-\textbf{v}|_1$ is the L1 distance between $\textbf{y}$ and $\textbf{v}$.
    To perform the attack, we compute the posterior probability $P_V(v|\textbf{y})$ for each possible input value $v \in V$ and output the most probable input value. 
    In other words, given the reported $\textbf{y}$, our Bayes optimal attack $\calA_{SHE}$ outputs:
    
    \begin{equation} \label{eq:prediction_HE}
    \hat{v} = \arg\max_{v \in V} P_V(v|\textbf{y}) \mathrm{.}
    \end{equation}
    
    Note that this attack requires knowledge of the prior probability distribution $P_V(v)$. 
    If the prior is unknown (assumed in this paper), one can use a uniform prior.
    
    \item \textbf{Thresholding with HE (THE)}~\cite{tianhao2017}. 
    With THE, the server (or the user) can construct the support set as $\setOne_{\mathrm{THE}} = \{ v \hspace{0.1cm} | \hspace{0.1cm} \textbf{y}_v \hspace{0.1cm} >  \hspace{0.1cm}\theta\}$, \ie, each noise count whose value $> \theta$.
    The optimal threshold value for $\theta$ that minimizes the protocol's variance is within $(0.5, 1)$.
    With $\setOne_{\mathrm{THE}} = \{ v \hspace{0.1cm} | \hspace{0.1cm} \textbf{y}_v \hspace{0.1cm} >  \hspace{0.1cm}\theta\}$, we propose an adversary $\calA_{\mathrm{THE}}$ with two attack strategies:
    \begin{itemize}
        \item $\calA^0_{\mathrm{THE}}$ is a random choice $\hat{v}=\mathrm{Uniform}\left( [k] \right)$, if $\setOne_{\mathrm{THE}}=\emptyset$;
        
        \item $\calA^1_{\mathrm{THE}}$ is a random choice $\hat{v}=\mathrm{Uniform}\left( \setOne_{\mathrm{THE}} \right)$, otherwise.
    \end{itemize}
    
\end{itemize}

\subsection{Approximate ($\epsdelta$)-LDP Protocols} \label{sub:approximate_ldp_protocols}

In this section, we describe two ($\epsdelta$)-LDP protocols, which are based on the Gaussian mechanism~\cite{dwork2014algorithmic,balle18a}.
We defer the descriptions of approximate ($\epsdelta$)-LDP variants~\cite{Wang2021_approx_ldp} of GRR, SUE and LH protocols -- namely, Approximate GRR (AGRR), Approximate SUE (ASUE), Approximate LH (ALH) -- to Appendix~\ref{app:approximate_ldp_protocols_detailed}.

\textbf{HE with Gaussian Mechanism (HE-GM)~\cite{dwork2014algorithmic,balle18a}.} Similar to HE protocols of Section~\ref{sub:pure_ldp_protocols}, HE-GM protocols encode the user value as a one-hot $k$-dimensional histogram.
Then, $\textrm{HE-GM}(\textbf{v})$ perturbs each bit of $\textbf{v}$ independently using a Gaussian mechanism (GM)~\cite{dwork2014algorithmic,balle18a}.
Two different input values $v_1, v_2 \in V$ will result in two vectors with L2 distance of $\Delta_2=\sqrt{2}$.
Thus, HE-GM will output $\textbf{y}$ such that $\textbf{y}_i = \textbf{v}_i + \mathcal{N}\left( 0, \sigma^2 \right)$, in which $\sigma$ is determined by $\epsdelta, \textrm{ and } \Delta_2$.
When using the well-established GM for $\epsdelta \in (0, 1)$, $\sigma= \frac{\Delta_2}{\epsilon} \sqrt{2 \ln(1.25/\delta)}$~\cite{dwork2014algorithmic}.
In this paper, we also consider the Analytic GM (AGM)~\cite{balle18a}, which is an improved version of the GM~\cite{dwork2014algorithmic} and can be applied for any $\epsilon>0$.
The main difference between GM and AGM is the method to parameterize $\sigma$.
With AGM, $\sigma$ is calculated analytically as demonstrated in~\cite[Algorithm 1]{balle18a} and its implementation~\cite{balle_agm}.
Hereafter, we will specifically denote ``AGM'' and ``GM'' when referring to HE-GM instantiated with AGM and GM, respectively.

Building upon our distinguishability attack's description of the SHE protocol with Laplace noise, we extend the attack analysis to HE-GM protocols. 
The overall strategy, including the use of Bayes' theorem to compute posterior probabilities, remains consistent with our prior description in Section~\ref{sub:pure_ldp_protocols} (\emph{cf.} Equation~\eqref{eq:prior_bayes}). 
However, the key difference lies in the noise distribution used for ensuring LDP.

While the Laplace mechanism involves adding noise drawn from a Laplace distribution with scale $b=2/\epsilon$, the Gaussian mechanism adds noise following the normal distribution, (\ie, $\mathcal{N}\left( 0, \sigma^2 \right)$). 
This needs a different computation for the likelihood $P_Y(\mathbf{y}|v)$ in Equation~\eqref{eq:likelihood_laplace}, reflecting the properties of Gaussian noise.
Accordingly, the likelihood of observing $\mathbf{y}$ given $\mathbf{v}$ under Gaussian noise is:

\begin{equation} \label{eq:likelihood_gaussian}
    P_Y(\mathbf{y}|\mathbf{v}) = \frac{1}{\sqrt{(2\pi\sigma^2)^k}} \exp\left(-\frac{|\mathbf{y}-\mathbf{v}|_2^2}{2\sigma^2}\right) \mathrm{,}
\end{equation}

\noindent in which $|\mathbf{y}-\mathbf{v}|_2^2$ denotes the L2 squared distance between $\mathbf{y}$ and $\mathbf{v}$.

Then, our Bayes optimal attack for HE-GM protocols $\calA_{\textrm{HE-GM}}$ predicts the most probable input value, $\hat{v}$, given the reported $\mathbf{y}$, by following Equation~\eqref{eq:prediction_HE}.
Remark that Equation~\eqref{eq:likelihood_gaussian} is valid for both GM and AGM as a function of their respective noise scale $\sigma$.
Similar to the $\calA_{SHE}$ attack, if the prior probability distribution $P_V(v)$ is unknown, a uniform prior may be assumed for the analysis.

\section{LDP Auditing} \label{sec:ldp_auditing}

In this section, we introduce our LDP-Auditor framework (Section~\ref{sub:ldp_auditor}) and our distinguishability attacks considering multiple data collections (Section~\ref{sub:ldp_auditor_long} and Section~\ref{sub:ldp_auditor_multidimensional}).

\subsection{LDP-Auditor} \label{sub:ldp_auditor}

Our LDP-Auditor framework builds upon previous work on central DP auditing~\cite{Jagielski2020} with slight modifications tailored for LDP auditing.
This adaptation is necessary due to the intrinsic differences between the central DP and LDP models, primarily regarding the granularity of privacy and the nature of the data being protected.
Figure~\ref{fig:adversarial_privacy_game} in Appendix~\ref{app:adv_priv_game} compares the adversarial privacy game between central and local DP.
Unlike central DP, in which the adversary's objective is to distinguish between two ``neighboring datasets'', the LDP model shifts the focus towards distinguishing between individual ``inputs''.
We instantiate LDP-Auditor with distinguishability attacks to construct a robust test statistic for auditing LDP mechanisms. 
Specifically, we can formulate a distinguishability attack as a binary hypothesis testing problem: $\mathcal{H}$: ``$y$ comes from $v_1$''.
The attacker receives an output drawn from one of the two distributions $\calM(v_1)$ or $\calM(v_2)$ and has to infer whether the input was $v_1$ or not.
If the algorithm $\calM$ is $(\epsdelta)$-LDP, then no distinguishability attacks can be too accurate~\cite{Kairouz2015}.
Specifically, for any distinguishability attack $\calA$, we can statistically measure LDP re-writing Equation~\eqref{eq:ldp} as:

\begin{equation} \label{eq:ldp_audit}
    \underbrace{\Pr[\calA(\calM(\textcolor{teal}{v_1})) = \textcolor{teal}{v_1}]}_{\textcolor{teal}{\text{True Positive Rate (TPR)}}} \leq e^{\epsilon} \cdot 
    \underbrace{\Pr[\calA(\calM(\textcolor{red}{v_2})) = \textcolor{teal}{v_1}]}_{\textcolor{red}{\text{False Positive Rate (FPR)}}} +\textrm{ } \delta \textrm{.}
\end{equation}

If $\calM$ satisfies $(\epsdelta)$-LDP, then $\epsilon \geq \ln \left (\frac{\textcolor{teal}{\textrm{TPR}} - \delta}{\textcolor{red}{\textrm{FPR}}} \right)$.
In this formulation, the TPR is the probability that the attack correctly identifies $y$ as coming from $v_1$, and the FPR is the likelihood that $y$ is incorrectly attributed to $v_1$ when it comes from $v_2$.
Note that the $\delta$ term in Equation~\eqref{eq:ldp_audit} reflects the privacy budget from $\calM$ and is not an independent probability or error rate introduced by the distinguishability attack $\calA$.
However, a single run of a distinguishability attack is typically not sufficient to draw meaningful conclusions due to the inherent variability in the mechanism's outputs. 
Thus, to ensure the robustness of our empirical privacy loss estimation and account for statistical uncertainty, LDP-Auditor runs for multiple trials $T$ in order to compute the TPR and FPR from Equation~\eqref{eq:ldp_audit}.
Then, to affirm that our empirical privacy loss estimation is valid with a probability greater than $1 - \alpha$, we use Clopper-Pearson confidence intervals\footnote{We briefly describe the generic Clopper-Pearson method in Appendix~\ref{app:clopper_pearson}.}~\cite{clopper1934use} to establish a lower bound $\hat{p}_1$ for the FPR and an upper bound $\hat{p}_0$ for the TPR, each with a confidence of $1 - \alpha/2$. 
As a consequence, we can be confident that our empirical privacy loss estimation $\epslb = \ln \left (\frac{\hat{p}_0 - \delta}{\hat{p}_1} \right)$, holds with probability $1 - \alpha$.
This procedure is outlined in Algorithm~\ref{alg:ldp_auditor_lb}, and we prove its correctness in Theorem~\ref{theorem:ldp_auditor}.
The proof of Theorem~\ref{theorem:ldp_auditor} is deferred to Appendix~\ref{app:proof_theorem_1}.

\begin{algorithm}[!ht]
\caption{LDP-Auditor.}
\label{alg:ldp_auditor_lb}
\begin{algorithmic}[1]

\Statex \textbf{Input :} Theoretical $\epsilon$ and $\delta$, LDP protocol $\calM$, distinguishability attack $\calA$, values $v_1, v_2 \in V$, trial count $T$, confidence level $\alpha$. 
\Statex \textbf{Output :} Estimated privacy loss $\epslb$.

\State $\mathrm{TP}=0$, $\mathrm{FP} = 0$   \Comment{True Positive (TP) and False Positive (FP)}

\State \textbf{for} $i \in [T]$ \textbf{do}  

\State  \hskip1em \textbf{if} $\calA(\calM(v_1)) = v_1$ $\quad \mathrm{TP} = \mathrm{TP} + 1$

\State  \hskip1em \textbf{if} $\calA(\calM(v_2)) = v_1$ $\quad \mathrm{FP} = \mathrm{FP} + 1$

\State \textbf{end for}

\State $\hat{p}_0=\textrm{ClopperPearsonLower}(\mathrm{TP}, T, \alpha/2)$ 

\State $\hat{p}_1=\textrm{ClopperPearsonUpper}(\mathrm{FP}, T, \alpha/2)$ 

\Statex \textbf{return :} $\epslb=\ln((\hat{p}_0 - \delta) / \hat{p}_1)$ 
\end{algorithmic}
\end{algorithm}

\begin{theorem}[Correctness of LDP-Auditor]
\label{theorem:ldp_auditor} Given black-box access to an LDP mechanism $\calM$, and a distinguishability attack $\calA$, for any two distinct values $v_1, v_2$, a number of trials $T$, and a statistical confidence $\alpha$, if LDP-Auditor in Algorithm~\ref{alg:ldp_auditor_lb} returns $\epslb$, then, with probability $1-\alpha$, $\calM$ does not satisfy $(\epsilon',\delta)$-LDP for any $\epsilon' < \epslb$.
\end{theorem}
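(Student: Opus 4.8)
The plan is to show that the value $\epslb$ returned by Algorithm~\ref{alg:ldp_auditor_lb} is, with probability at least $1-\alpha$, a certified lower bound on the privacy parameter of $\calM$ at the fixed $\delta$, and then phrase this as the non-satisfaction statement in the theorem. First I would name the two quantities the attack is actually estimating: the true rates $p_0^\star = \Pr[\calA(\calM(v_1)) = v_1]$ (the TPR) and $p_1^\star = \Pr[\calA(\calM(v_2)) = v_1]$ (the FPR) appearing in Equation~\eqref{eq:ldp_audit}. Since every trial of the loop in Algorithm~\ref{alg:ldp_auditor_lb} draws fresh internal randomness of $\calM$ and $\calA$, the counters are sums of i.i.d. Bernoulli indicators, so $\mathrm{TP} \sim \mathrm{Binomial}(T, p_0^\star)$ and $\mathrm{FP} \sim \mathrm{Binomial}(T, p_1^\star)$. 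This reduces the probabilistic content of the claim to the coverage of two one-sided confidence intervals for binomial parameters.

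Next I would invoke the Clopper--Pearson coverage guarantee (Appendix~\ref{app:clopper_pearson}). By construction, $\texttt{ClopperPearsonLower}$ at level $\alpha/2$ makes the event $E_0 = \{\hat{p}_0 \leq p_0^\star\}$ hold with probability at least $1-\alpha/2$, and symmetrically $\texttt{ClopperPearsonUpper}$ makes $E_1 = \{\hat{p}_1 \geq p_1^\star\}$ hold with probability at least $1-\alpha/2$. A union bound over the two failure events, $\Pr[E_0^c \cup E_1^c] \leq \alpha/2 + \alpha/2 = \alpha$, then yields $\Pr[E_0 \cap E_1] \geq 1-\alpha$; note this step needs no independence between the two families of trials. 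Thus, with the desired confidence, $\hat{p}_0$ under-estimates the TPR and $\hat{p}_1$ over-estimates the FPR simultaneously.

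The remaining argument is deterministic and carried out on the event $E_0 \cap E_1$. Because the map $(a,b) \mapsto \ln\!\left(\frac{a-\delta}{b}\right)$ is increasing in $a$ and decreasing in $b$, the orderings $\hat{p}_0 \leq p_0^\star$ and $\hat{p}_1 \geq p_1^\star$ give
$$\epslb = \ln\!\left(\frac{\hat{p}_0 - \delta}{\hat{p}_1}\right) \leq \ln\!\left(\frac{p_0^\star - \delta}{p_1^\star}\right).$$
Now suppose toward a contradiction that $\calM$ satisfied $(\epsilon',\delta)$-LDP for some $\epsilon' < \epslb$. Instantiating the LDP inequality in its distinguishability form (Equation~\eqref{eq:ldp_audit}) with this attack $\calA$ gives $p_0^\star \leq e^{\epsilon'} p_1^\star + \delta$, equivalently $\epsilon' \geq \ln\!\left(\frac{p_0^\star - \delta}{p_1^\star}\right) \geq \epslb$, which contradicts $\epsilon' < \epslb$. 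Hence no such $\epsilon'$ exists, and since this reasoning lives on the event $E_0 \cap E_1$ of probability at least $1-\alpha$, the theorem follows.

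The main obstacle I expect is bookkeeping rather than depth: one must pair the \emph{lower} confidence bound with the TPR and the \emph{upper} bound with the FPR so that $\epslb$ comes out as a conservative under-estimate, and split the failure budget as $\alpha/2 + \alpha/2$ so the union bound closes at $1-\alpha$ rather than leaking a larger constant. I would also make explicit the non-degeneracy conditions $\hat{p}_0 - \delta > 0$ and $\hat{p}_1 > 0$ required for $\epslb$ and the intermediate logarithms to be well defined (these hold whenever the algorithm returns a finite value), and state the fresh-randomness-per-trial assumption that legitimizes the binomial model. Everything else is monotonicity of $\ln$ together with a single union bound.
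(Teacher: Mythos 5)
Your proof is correct and follows essentially the same route as the paper's: Clopper--Pearson coverage gives $\hat{p}_0 \leq p_0^\star$ and $\hat{p}_1 \geq p_1^\star$ jointly with probability at least $1-\alpha$, and the distinguishability form of the LDP inequality then rules out any $(\epsilon',\delta)$-LDP claim with $\epsilon' < \epslb$. You merely make explicit some details the paper leaves implicit (the $\alpha/2+\alpha/2$ union bound, the binomial model, monotonicity of $(a,b)\mapsto\ln\bigl((a-\delta)/b\bigr)$, and the non-degeneracy conditions), which is a welcome but not substantively different elaboration.
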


\textbf{Accounting for statistical uncertainty.}  We highlight that when we refer to $\epslb$ as an empirical lower bound with probability $1 - \alpha$, this naming is solely due to the inherent randomness of the Monte Carlo sampling process, without the need for any specific modeling or assumptions. 
By increasing the number of trials $T$, we can progressively enhance our confidence level towards 1. 
Furthermore, the decision to employ the Clopper-Pearson method stems from its relevance when an exact confidence interval is desired, in contrast to approximate methods (\eg, heuristic approaches). 
This approach enables a more reliable safeguard against underestimating privacy risks, and has been widely used in central DP audit research~\cite{Jagielski2020,nasr2021adversary,tramer2022debugging,bichsel2021dpsniper}. 
In this work, we utilize the Clopper-Pearson implementation provided by the \texttt{proportion\_confint} method in the Python package \texttt{statsmodels} (\url{https://pypi.org/project/statsmodels/}).

\textbf{Choice of parameters.} Given that LDP frequency estimation protocols usually distribute noise uniformly at random, the estimation of the empirical privacy loss $\epslb$ is not contingent upon selecting values $v_1$ and $v_2$ to represent a ``worst-case scenario'', unlike in central DP audit. 
Considering $V = \{1, 2, \ldots, k\}$, in this work, we set $v_1 = 1$ and $v_2 = 2$. 
The performed tests revealed no statistical difference when experiments were conducted with $v_1 = 1$ and a dynamic $v_2 = \textrm{Uniform}(2, k)$. 
Considering the experimental setup parameters, the number of trials $T$ and the confidence level $1-\alpha$ should be chosen to balance computational efficiency with the robustness of the empirical privacy loss estimation. 
Typically, a larger $T$ enhances the reliability of $\epslb$ estimates, while a smaller $\alpha$ increases the confidence in these estimates. 
In this work, we recommend selecting $T$ to be sufficiently large to ensure stable estimates across multiple experiments (\eg, we set $T=10^6$) and setting $\alpha$ to reflect a high confidence level, such as $0.05$ or $0.01$, to underpin the statistical significance of the empirical findings.

\textbf{Limits on the empirical privacy loss estimation.} 
The $\epslb$ reported by Algorithm~\ref{alg:ldp_auditor_lb} is upper bounded by the theoretical $\epsilon$ but also by an upper bound imposed by Monte Carlo estimation, which will be denoted by $\epsopt$ and depends on $\alpha$ and $T$.
For instance, let $\alpha=0.01$ to get a $99\%$-confidence bound and $T=10^4$ trials.
Even if we get perfect inference accuracy with $\mathrm{TP}=T$ and $\mathrm{FP}=0$, the Clopper-Pearson confidence interval would produce $\hat{p}_0=0.9994$ and $\hat{p}_1=0.0006$, which implies an empirical privacy loss of $\epslb=7.42$.
This means, with 99\% probability, the true $\epsilon$ is at least $7.42$, and $\epsopt(\alpha, T) = 7.42$.

\subsection{LDP-Auditor for Longitudinal Studies} \label{sub:ldp_auditor_long}

In practice, the server often needs to collect users’ data periodically throughout multiple data collections (\ie, \textit{longitudinal studies}). 
Nevertheless, in the worst-case, one known result in (L)DP is that \textbf{repeated data collections have a linear privacy loss due to the sequential composition}~\cite{dwork2014algorithmic}.
This occurs because attackers can exploit ``averaging attacks'' to distinguish the user's actual value from the added noise.
For this reason, well-known LDP mechanisms for longitudinal studies such as RAPPOR~\cite{rappor} (deployed in Google Chrome) and $d$BitFlipPM~\cite{microsoft} (deployed in Windows 10), were designed with a \textit{memoization-based} solution.
We discuss how to audit LDP mechanisms based on memoization in Appendix~\ref{app:memoization_ldp}.

Given $\tau$ data collections, we aim to audit the empirical privacy loss of LDP protocols in comparison to the upper bound $\tau \epsilon$-LDP imposed by the (L)DP sequential composition.
Our main motivation is to evaluate how tight the sequential composition is for LDP protocols.
Furthermore, this audit will provide insights into the privacy implications of real-world applications similar to those implemented by Apple~\cite{apple}, in which memoization was not employed. 

In Algorithm~\ref{alg:attack_ldp_long}, we present the extension of distinguishability attacks on LDP protocols to longitudinal studies $\calA^{L}$.
In this context, the adversary's objective remains the same: to predict the user's true value by determining the most probable value that would have generated the reported value $y^t$ after $\tau$ data collections.
Notably, the adversary now possesses an increased knowledge due to random fresh noise being added to the user's value $v$ over $\tau$ times.
To perform the ``averaging attack'', in each data collection, the adversary constructs the ``support set'' based on the reported value $y^t$ and LDP mechanism $\calM$. 
The support set is then used to increment the knowledge (\ie, count) about the user's true value and what constitutes noisy data, ultimately predicting $\hat{v}$.
We highlight that the exceptions are HE-based protocols in which the notion of a support set is not applicable, namely SHE, GM and AGM, rendering Algorithm~\ref{alg:attack_ldp_long} inapplicable.
In these protocols, Laplace or Gaussian noise with a mean of $0$ is added in each data collection.
Consequently, the ``averaging attack'' is straightforward as it involves determining $\hat{v}$ by taking the \texttt{argmax} of the summation of all reports.
Formally, this is expressed as $\hat{v} = \mathrm{argmax}\left( \sum_{t=1}^{\tau} \mathbf{y}^t \right)$.

Finally, our LDP-Auditor framework (Algorithm~\ref{alg:ldp_auditor_lb}) can be used to estimate the privacy loss of LDP protocols in longitudinal studies. 
To achieve this, one can simply replace ``$\calA(\calM(v))$'' in Lines 3 and 4 of Algorithm~\ref{alg:ldp_auditor_lb} with ``$\calA^{L}(v)$'', \ie, the distinguishability attack outlined in Algorithm~\ref{alg:attack_ldp_long}, which already takes into account $\calM$.

\begin{algorithm}[t]
\caption{Distinguishability Attack in Longitudinal Study: $\calA^{L}$.}
\label{alg:attack_ldp_long}
\begin{algorithmic}[1]

\Statex \textbf{Input :} User value $v$, privacy guarantee $\epsilon$, LDP protocol $\calM$, number of data collections $\tau$. 
\Statex \textbf{Output :} Predicted value $\hat{v}$.

\State Initialize a $k$ sized zero-vector $\mathbf{z}=[0,0,\ldots,0]$

\State \textbf{for} $t \in [\tau]$ \textbf{do:} 

\State  \hskip1em User-side randomization $y^t=\calM(v)$

\State  \hskip1em Given $y^t$, adversary construct support set $\setOne_{\calM}$

\State  \hskip1em \textbf{for} $v \in \setOne_{\calM}$ \textbf{do:}

\State  \hskip2em Increment count $\textbf{z}[v] = \textbf{z}[v] + 1$

\State \hskip1em \textbf{end for}

\State \textbf{end for}

\State Predict $\hat{v}=\mathrm{argmax}(\textbf{z})$

\Statex \textbf{return :} $\hat{v}$
\end{algorithmic}
\end{algorithm}

\subsection{LDP-Auditor for Multidimensional Data} \label{sub:ldp_auditor_multidimensional}

Another dimension of interest to the server is \textit{multidimensional data} (\ie, $d \geq 2$ attributes), aiming to enable more comprehensive decision-making.
Considering potential correlations among these attributes, the principles of DP sequential composition~\cite{dwork2014algorithmic} remain applicable in this context.
Therefore, the existing solutions for multidimensional data, represented as $\mathbf{v}=[v_1,v_2,\ldots,v_d]$, include:

\begin{itemize}
    \item \textbf{Splitting (SPL):} This naïve method involves partitioning the privacy budget $\epsilon$ among the $d$ attributes, collecting each attribute under $\frac{\epsilon}{d}$-LDP. 
    Examples based on this SPL solution are the LoPub~\cite{Ren2018} and Castell~\cite{kikuchi2022castell} mechanisms, which are designed for joint distribution estimation.
    
    \item \textbf{Sampling (SMP):} In this approach, users are divided into $d$ disjoint sub-groups. 
    Each sub-group $j \in [d]$ then reports the $j$-th attribute under $\epsilon$-LDP. 
    Example of mechanisms using the SMP solution include CALM~\cite{Zhang2018} and FELIP~\cite{Filho2023}, proposed for marginal estimation, and~\cite{nguyen2016collecting,wang2019,Wang2021_approx_ldp}, which introduced LDP mechanisms for mean estimation.
    
    \item \textbf{Random Sampling Plus Fake Data (RS+FD)~\cite{Arcolezi_rs_fd}:} In this solution, each user samples a single attribute $j \in [d]$ to report $v_j$ under $\epsilon'$-LDP and reports uniform fake data for the $d-1$ non-sampled attributes.
    Because the sampling result is not disclosed to the aggregator, there is amplification by sampling~\cite{Li2012,Balle2018}.
    For this reason, RS+FD utilizes an amplified privacy budget $\epsilon'=\ln{\left( d \cdot (e^{\epsilon} - 1) + 1 \right)}$ for the sampled attribute.
    An example based on RS+FD is the GRR-FS mechanism~\cite{bhaila2023local}, designed for node-level LDP on graph data, to enable training of graph neural networks.
\end{itemize}

Upon closer examination of the three solutions, one can notice that both SPL and SMP solutions can be considered as straightforward instances of reporting one attribute with a given LDP mechanism (one at a time for SPL).
Consequently, our LDP-Auditor framework can be directly used to estimate empirical privacy losses $\epslb$ for LDP mechanisms following the SPL and SMP solutions.
Therefore, in this work, our focus shifts towards auditing the RS+FD solution, for which there is a privacy amplification effect due to uncertainty on the server side.

In Algorithm~\ref{alg:attack_rs+fd}, we introduce the distinguishability attack designed for LDP protocols following the RS+FD solution, denoted as $\calA^{\textrm{RS+FD}}$.
Here, the adversary's objective is twofold: first, to predict the attribute that the user has sampled, and subsequently, to predict the user's actual value.
Since each user selects an attribute $j \in [d]$ uniformly at random, the Bayes optimal guess for the adversary is $\hat{j} = \mathrm{Uniform}([d])$.
Once the attribute is predicted, the adversary constructs the ``support set'' based on the reported value $y_{\hat{j}}$ and LDP mechanism $\calM$.
With the support set, as in Section~\ref{sec:background}, the adversary predicts the user's value $\hat{v}_{\hat{j}}$.

Finally, we extend our LDP-Auditor framework for RS+FD protocols in Algorithm~\ref{alg:ldp_auditor_rs+fd}. 
The main change is due to the multidimensional data setting, for which we define $\mathbf{v_1}$ and $\mathbf{v_2}$ in Lines 1 and 2 of Algorithm~\ref{alg:ldp_auditor_rs+fd}.
The test statistic remains unchanged, as it is derived from distinguishability attacks as per Algorithm~\ref{alg:attack_rs+fd}.
Notice that one main difference with RS+FD auditing is that even if the user did not sample the attribute $\hat{j}$, the attack can still predict the user's value $v_j$ correctly due to uniform fake data generation for that attribute.
Our goal is thus to audit if RS+FD satisfies the claimed $\epsilon$-LDP guarantee with amplification by sampling.
Algorithm~\ref{alg:ldp_auditor_rs+fd} builds upon the foundational principles established in Section~\ref{sub:ldp_auditor} and in Algorithm~\ref{alg:ldp_auditor_lb} and, consequently, the framework's correctness and reliability extend to this adaptation as well.

\begin{algorithm}[htb]
\caption{Distinguishability Attack on RS+FD: $\calA^{\textrm{RS+FD}}$.}
\label{alg:attack_rs+fd}
\begin{algorithmic}[1]

\Statex \textbf{Input :} User values $\mathbf{v}=[v_1,v_2,\ldots,v_d]$, privacy guarantee $\epsilon$, RS+FD protocol $\calM$. 
\Statex \textbf{Output :} Predicted value $\hat{v}_{\hat{j}}$.

\State \textbf{for} $i \in [d]$ \textbf{do:}  \Comment{\cf{} RS+FD~\cite[Algorithm 1]{Arcolezi_rs_fd}}

\State  \hskip1em User-side randomization $y_i=\calM(v_i)$

\State \textbf{end for}

\State Adversary predict user's sampled attribute $\hat{j} = \mathrm{Uniform}([d])$

\State Given $y_{\hat{j}}$, construct support set $\setOne_{\calM}$

\State Predict $\hat{v}_{\hat{j}}=\mathrm{Uniform}(\setOne_{\calM})$ \Comment{\cf{} Section~\ref{sec:background}}

\Statex \textbf{return :} $\hat{v}_{\hat{j}}$
\end{algorithmic}
\end{algorithm}

\begin{algorithm}[!ht]
\caption{LDP-Auditor for RS+FD Protocols.}
\label{alg:ldp_auditor_rs+fd}
\begin{algorithmic}[1]

\Statex \textbf{Input :} Theoretical $\epsilon$, LDP protocol $\calM$, distinguishability attack $\calA^{\textrm{RS+FD}}$, values $v_1, v_2 \in V$, trial count $T$, confidence level $\alpha$. 
\Statex \textbf{Output :} Estimated privacy loss $\epslb$.

\State $\mathbf{v_1}=[v_1, v_1, \ldots, v_1]_{1\times d}$, $\mathbf{v_2}=[v_2, v_2, \ldots, v_2]_{1\times d}$

\State $\mathrm{TP}=0$, $\mathrm{FP} = 0$  \Comment{True Positive (TP) and False Positive (FP)}

\State \textbf{for} $i \in [T]$ \textbf{do}  

\State  \hskip1em \textbf{if} $\calA^{\textrm{RS+FD}}(\mathbf{v_1}) = v_1$ $\quad \mathrm{TP} = \mathrm{TP} + 1$

\State  \hskip1em \textbf{if} $\calA^{\textrm{RS+FD}}(\mathbf{v_2}) = v_1$ $\quad \mathrm{FP} = \mathrm{FP} + 1$

\State \textbf{end for}

\State $\hat{p}_0=\textrm{ClopperPearsonLower}(\mathrm{TP}, T, \alpha/2)$ 

\State $\hat{p}_1=\textrm{ClopperPearsonUpper}(\mathrm{FP}, T, \alpha/2)$ 

\Statex \textbf{return :} $\epslb=\ln(\hat{p}_0 / \hat{p}_1)$ 
\end{algorithmic}
\end{algorithm}

\section{Experimental Evaluation} 
\label{sec:results}
This section presents our experimental setting to assess the proposed audit framework as well as the main results obtained.

\subsection{General Setup of Experiments} \label{sub:setup_experiments}

For all experiments, we have used the following setting:

\begin{itemize}
    \item \textbf{Environment.} All algorithms are implemented in Python 3 with the Numpy~\cite{numpy}, Numba~\cite{numba}, Ray~\cite{ray}, Multi-Freq-LDPy~\cite{multi_freq_ldpy} and pure-LDP~\cite{pureldp,Cormode2021} libraries, and run on a local machine with 2.50GHz Intel Core i9 and 64GB RAM. 
    Our LDP-Auditor tool is open-sourced in a GitHub repository~\cite{artifact_ldp_audit}.

    \item \textbf{Audit parameters.} We set $T=10^6$ trial counts and use Clopper-Pearson confidence intervals with $\alpha=0.01$ (\ie, our estimates hold with $99\%$ confidence).
    These parameters establish the Monte Carlo upper bound as $\epsopt=12.025$.

    \item \textbf{Stability.} Since LDP protocols are randomized, we report average results with standard deviation over 5 runs.
\end{itemize}

\subsection{Main Auditing Results} \label{sub:main_results}

We begin by presenting our main LDP auditing results, considering:

\begin{itemize}
    \item \textbf{LDP protocols.} We audit the eight $\epsilon$-LDP frequency estimation protocols described in Section~\ref{sub:pure_ldp_protocols} and the six ($\epsdelta$)-LDP frequency estimation protocols described in Section~\ref{sub:approximate_ldp_protocols}.

    \item \textbf{Theoretical upper bound.} We evaluated the LDP frequency estimation protocols in high, mid and low privacy regimes over the range $\epsilon \in \{0.25, 0.5, 0.75, 1, 2, 4, 6, 10\}$.
    The chosen range for $\epsilon$ follows the state-of-the-art LDP literature (\eg, see~\cite{Hadamard,Zhang2018,Chamikara2020,Wang2021,Wu2023}) and real-world implementations~\cite{desfontaines2021list} (\eg, RAPPOR~\cite{rappor} with $\epsilon=0.5$).

    \item \textbf{Delta parameter.} For approximate LDP, we set $\delta=1e^{-5}$.

    \item \textbf{Domain size.} We also varied the domain size $k \in \{25, 50, 100, 150, 200\}$ as it influences the performance of the distinguishability attacks.
\end{itemize}

Figure~\ref{fig:audit_pure_ldp_protocols} illustrates the theoretical $\epsilon$ values (x-axis) versus the estimated $\epslb$ values (y-axis), demonstrating the comparison across various domain sizes $k$, for the eight $\epsilon$-LDP frequency estimation protocols: GRR, SS, SUE, OUE, BLH, OLH, SHE and THE.
Similarly, Figure~\ref{fig:audit_approx_ldp_protocols} presents analogous plots for the six ($\epsdelta$)-LDP frequency estimation protocols: AGRR, ASUE, ABLH, AOLH, GM and AGM.
\textit{Henceforth, when discussing our results, the notation ``(A)GRR'' will be used whenever the findings are applicable to both GRR and AGRR protocols (analogously for other LDP protocols).}

\begin{figure*}[!ht]
    \centering
    \includegraphics[width=0.7\linewidth]{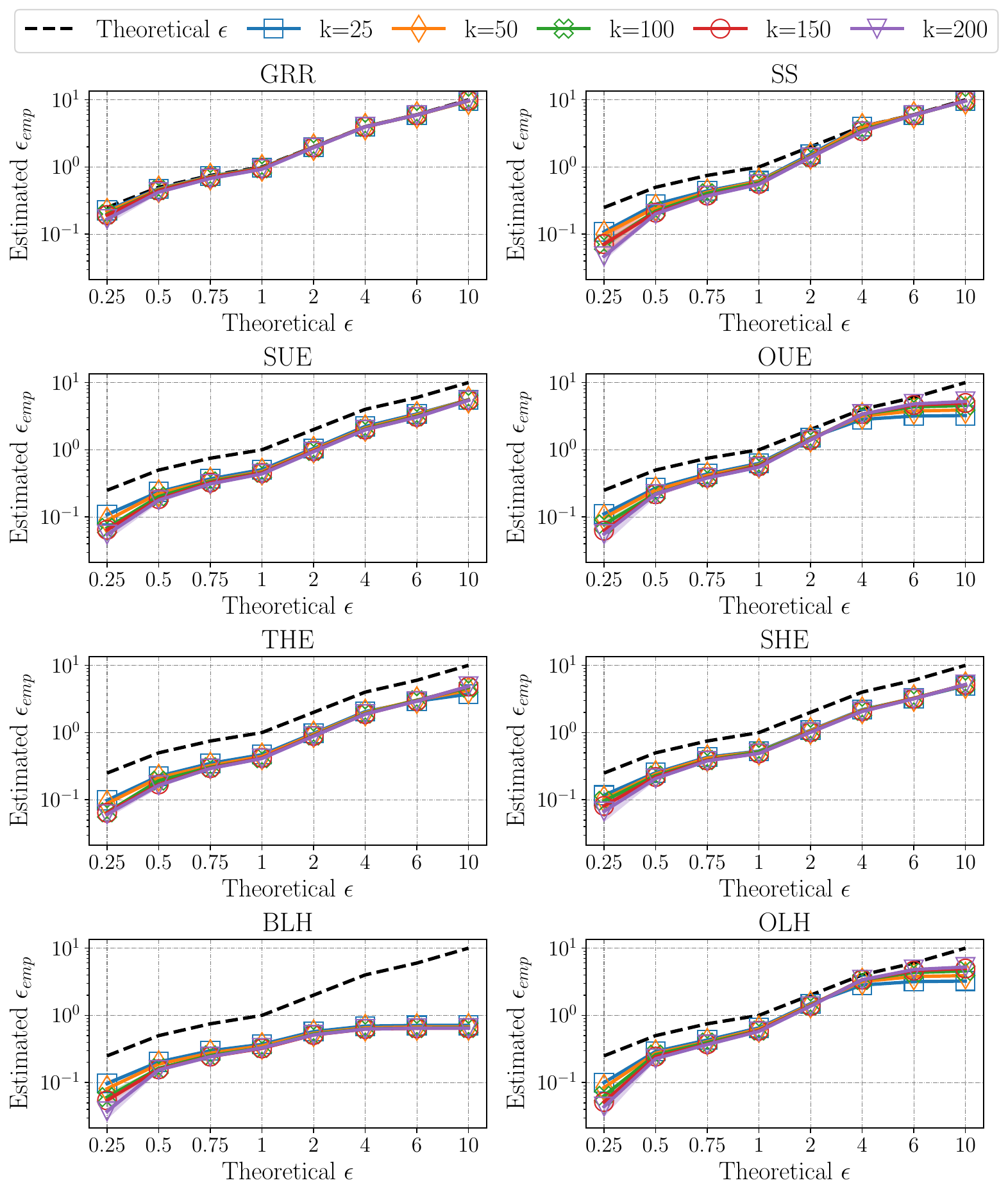}
    \caption{Theoretical $\epsilon$ values (x-axis) versus estimated $\epslb$ values (y-axis) using our LDP-Auditor framework with $\delta=0$. 
    We compare different domain sizes $k$ for eight state-of-the-art $\epsilon$-LDP frequency estimation protocols: GRR~\cite{kairouz2016discrete}, SS~\cite{wang2016mutual,Min2018}, SUE~\cite{rappor}, OUE~\cite{tianhao2017}, BLH~\cite{Bassily2015}, OLH~\cite{tianhao2017}, SHE~\cite{Dwork2006} and THE~\cite{tianhao2017}.}
    \label{fig:audit_pure_ldp_protocols}
\end{figure*}

\begin{figure*}[!ht]
    \centering
    \includegraphics[width=0.7\linewidth]{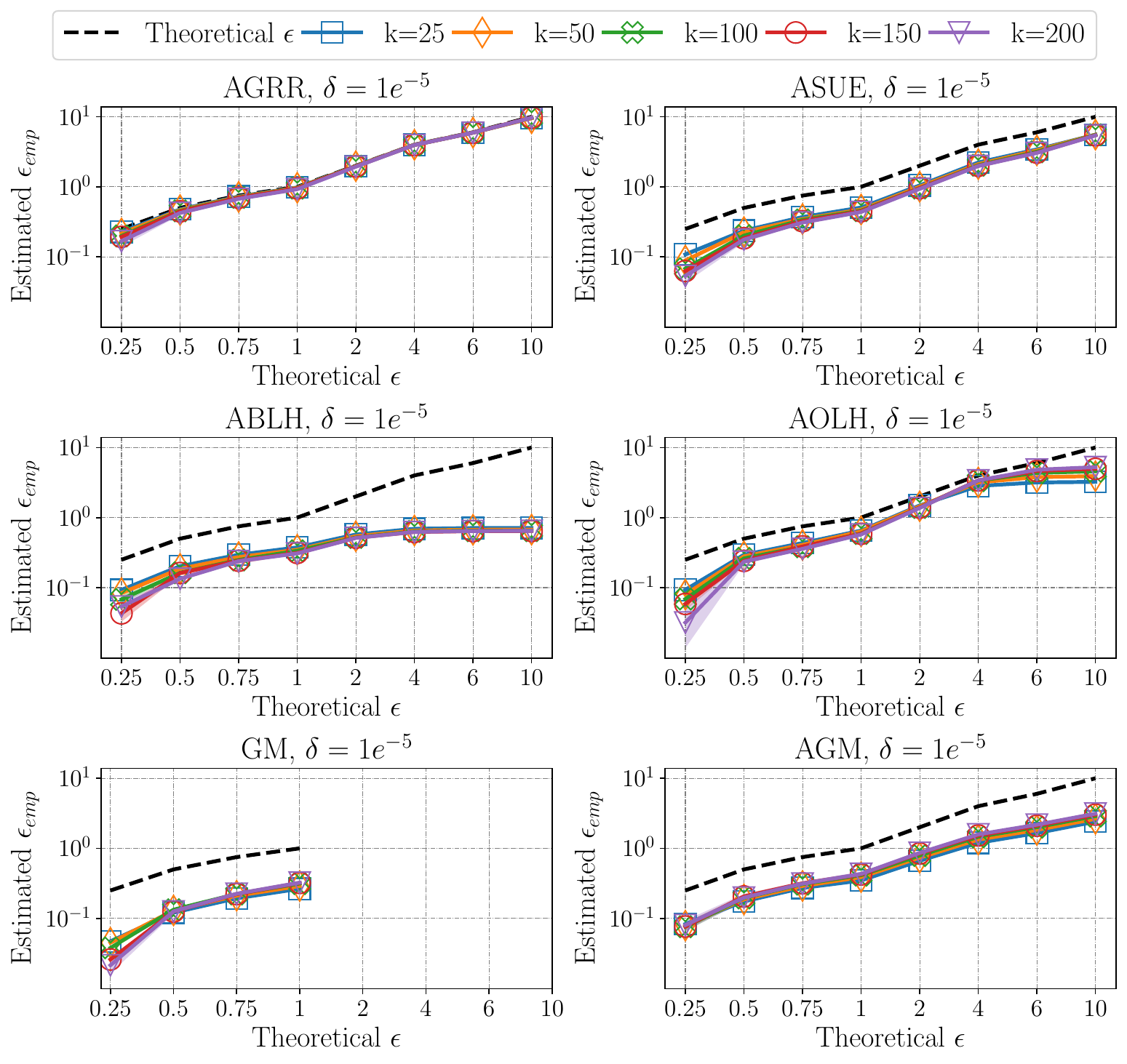}
    \caption{Theoretical $\epsilon$ values (x-axis) versus estimated $\epslb$ values (y-axis) using our LDP-Auditor framework with $\delta=1e^{-5}$. 
    We compare different domain sizes $k$ for six state-of-the-art ($\epsdelta$)-LDP frequency estimation protocols: AGRR~\cite{Wang2021_approx_ldp}, ASUE~\cite{Wang2021_approx_ldp}, ABLH~\cite{Wang2021_approx_ldp}, AOLH~\cite{Wang2021_approx_ldp}, GM~\cite{dwork2014algorithmic} and AGM~\cite{balle18a}.
    For GM, we only audit for certifiable theoretical upper bounds $\epsilon \leq 1$.}
    \label{fig:audit_approx_ldp_protocols}
\end{figure*}

\textbf{Effect of Encoding and Perturbation Functions.} It is important to note that LDP frequency estimation protocols employ different encoding and perturbation functions, leading to varying levels of susceptibility to distinguishability attacks~\cite{Gursoy2022,Arcolezi2023}.
Notably, as shown in Figure~\ref{fig:audit_pure_ldp_protocols} and Figure~\ref{fig:audit_approx_ldp_protocols}, one can notice that \textit{(A)GRR is the unique LDP protocol that achieves tight empirical privacy estimates for $\epslb$}.
As described in Section~\ref{sub:pure_ldp_protocols}, auditing (A)GRR's privacy guarantees is straightforward since there is no specific encoding (\ie, the input and output spaces are equal).
Conversely, all other LDP protocols (\ie, SS, UE-, LH- and HE-based) incorporate specific pre-processing encoding functions, which may result in information loss and/or additional randomness.

For instance, (A)BLH hashes the input set $V$ of size $k$ to $\{0,1\}$ and, thus results in excessive loss of information due to collisions.
Even if the bit is transmitted correctly after the (A)GRR perturbation, the server can only obtain one bit of information about the input (\ie, to which half of the input domain the value belongs to).
For these reasons, \textit{BLH consistently led to the worst auditing results} among the $\epsilon$-LDP protocols with a ``flat'' $\epslb<1$ estimation after $\epsilon \geq 2$.
Indeed, although both (A)LH protocols present similar empirical privacy losses $\epslb$ in high privacy regimes (\textit{the lowest among all other LDP protocols}), the difference is remarkable in favor of (A)OLH in mid to low privacy regimes. 
Thus, \textit{(A)OLH preserves more utility than (A)BLH, while providing tighter privacy loss estimation}.

Concerning the SS protocol that reports a subset $\Omega$ of $\omega$ values, one can note from Figure~\ref{fig:audit_pure_ldp_protocols} that the empirical privacy loss $\epslb$ demonstrated similar results to other LDP protocols in high privacy regimes. 
However, an exception occurs in low privacy regimes, in which SS equals GRR due to a subset size $\omega=1$, resulting in tight estimates for $\epslb$.
Regarding UE-based protocols, in high-privacy regimes ($\epsilon \leq 1$), both SUE and OUE presented similar empirical privacy estimates for $\epslb$ in Figure~\ref{fig:audit_pure_ldp_protocols}.
In mid-privacy regimes ($1 < \epsilon \leq 4$), OUE presented higher empirical privacy losses $\epslb$ than SUE. 
However, OUE reached a ``plateau'' estimation for $\epslb$ in low privacy regimes ($\epsilon > 4$), explained by an upper bound on the distinguishability attack (see~\cite{Gursoy2022}).
This plateau behavior is also observed for the (A)OLH protocol in low privacy regimes due to a comparable upper bound on the attacker effectiveness. 
Comparing approximate- and pure-SUE protocols, similar results were noticed for (A)SUE in Figure~\ref{fig:audit_pure_ldp_protocols} and Figure~\ref{fig:audit_approx_ldp_protocols}, considering all privacy regimes.

Lastly, for HE-based protocols, similar estimates for $\epslb$ were observed across all privacy regimes for both $\epsilon$-LDP protocols, namely SHE and THE, in Figure~\ref{fig:audit_pure_ldp_protocols}, albeit with varying sensitivity to the domain size $k$ (discussed afterwards). 
In contrast, from Figure~\ref{fig:audit_approx_ldp_protocols}, one can notice that the ($\epsdelta$)-LDP GM protocol led to the worst auditing results among all LDP protocols.
Therefore, \textit{in addition to AGM's ability to preserve greater utility than GM, it also offers more precise empirical privacy loss estimations}.

\textbf{Impact of domain size.} As the domain size $k$ increases, one can observe in Figure~\ref{fig:audit_pure_ldp_protocols} and Figure~\ref{fig:audit_approx_ldp_protocols} a direct impact on the empirical privacy loss estimation of $\epslb$ for all LDP protocols, in which the gap with the theoretical $\epsilon$ increases.
However, the impact is minor for the (A)GRR protocol, even in high privacy regimes.
Conversely, for all other LDP protocols, this impact is substantial, with empirical $\epslb$ estimates ranging within $\leq2.5$x of the theoretical $\epsilon$ (when $k=25$) up to $\leq5$x (when $k=200$).
These results are consistent with the distinguishability attack effectiveness, which decreases according to higher $k$ (\ie, more uncertainty)~\cite{Gursoy2022,Arcolezi2023}.
For instance, in the case of GRR, the probability $p=\frac{e^{\epsilon}}{e^{\epsilon}+k-1}$ of being ``honest'' in Equation~\eqref{eq:grr} decreases proportionally to $k$. 
In other mechanisms, there is a higher likelihood of introducing noise in the output $y$, such as by flipping more bits from 0 to 1 in (A)UE protocols.

Nevertheless, exceptions exist for both OUE and OLH protocols, in which in low privacy regimes (when $\epsilon \geq 4$), a larger domain size $k$ leads to tighter estimates of $\epslb$ than smaller domain sizes.
Although to a small extent, the THE protocol also yields more accurate estimates for higher $k$ when $\epsilon=10$.
Taking OUE as an example, these results can be attributed to the fact that the bit corresponding to the user's value is transmitted with a random probability of $\frac{1}{2}$ (\cf{} Equation~\eqref{eq:ue_parameters}). 
Consequently, if the domain size is small, it results in a higher false positive rate, which subsequently decreases the estimated empirical privacy loss $\epslb$.

\textbf{Generality of Our Findings.}
Overall, the gap between empirical $\epslb$ and theoretical $\epsilon$ privacy guarantees tends to widen in high privacy regimes (\ie, lower $\epsilon$ values).
This trend is particularly pronounced when considering the sensitivity of different LDP protocols to the domain size.
Lastly, we highlight that all $\epsilon$-LDP and ($\epsdelta$)-LDP frequency estimation protocols audited herein are building blocks of LDP mechanisms for more complex tasks such as:
heavy hitter estimation~\cite{Bassily2015,Wang2021}, joint distribution estimation~\cite{Ren2018,Zhang2018,kikuchi2022castell,Filho2023}, frequent item-set mining~\cite{Wang2018,Wu2023}, machine learning~\cite{Chamikara2020,Yilmaz2020}, frequency estimation of multidimensional data~\cite{nguyen2016collecting,wang2019,Arcolezi_rs_fd} and frequency monitoring~\cite{rappor,microsoft,Vidal2020,Arcolezi2022,Arcolezi2023evolving}.
Thus, our audit results provide generic insights that shed light on several critical factors influencing the estimation of the local privacy loss.

\subsection{Case Study \#1: Approximate- \emph{VS} Pure-LDP} \label{sub:audit_delta}

In theory,~\citet{Bun2019} proved that in the local DP model, approximate privacy is actually never more useful than pure privacy.
We will now compare approximate- and pure-LDP by assessing the impact of $\delta$ on the LDP auditing process.
In these experiments, we use the following parameter values:

\begin{itemize}
    \item \textbf{LDP protocols.} We audit the six ($\epsdelta$)-LDP protocols described in Section~\ref{sub:approximate_ldp_protocols}.
    
    \item \textbf{Theoretical upper bound.} Because GM requires $\epsilon \leq 1$~\cite{dwork2014algorithmic}, we vary the privacy guarantee only in high privacy regimes, within the range $\epsilon \in \{0.25, 0.5, 0.75, 1\}$.
    
    \item \textbf{Delta parameter.} We vary the $\delta$ parameter within the range $\delta \in \{0, 1e^{-7}, 1e^{-6}, 1e^{-5}, 1e^{-4}\}$; $\delta=0$ means $\epsilon$-LDP.

    \item \textbf{Domain size.} We vary the domain size $k \in \{25, 100, 150, 200\}$. 
    We present results for $k \in \{25, 200\}$ in the main paper and defer the others to Appendix~\ref{app:add_exp_impact_delta}
\end{itemize}

Figure~\ref{fig:audit_delta} illustrates the theoretical $\epsilon$ values (x-axis) versus estimated $\epslb$ values (y-axis) when varying the $\delta$ parameter and domain size $k \in \{25, 200\}$, using our LDP-Auditor framework. 
Note that for both GM and AGM protocols, there is no $\epslb$ value when $\delta = 0$, as these protocols do not have pure $\epsilon$-LDP variations.

Interestingly, for protocols such as AGRR, ASUE, ABLH and AOLH, our observations corroborate the theoretical assertions made by \citet{Bun2019} regarding the comparative utility of approximate versus pure privacy. 
More precisely, Figure~\ref{fig:audit_delta} and Figure~\ref{fig:appendix_audit_delta} (for $k \in \{100, 150\}$) indicate that variations in $\delta$ do not significantly alter the estimated privacy loss $\epslb$ across these protocols. 
This consistency in $\epslb$ values, irrespective of $\delta$ adjustments, suggests that for LDP protocols with a finite range, the audit outcomes for approximate-privacy closely align with those for pure-privacy. 
Conversely, the GM and AGM protocols exhibit distinct behaviours. 
More precisely, as $\delta$ increases, signaling a relaxation in the privacy constraint, we observe a narrowing gap between theoretical $\epsilon$ and empirical $\epslb$ values. 
This trend highlights a crucial aspect of LDP protocols with an infinite range, in which allowing for a nonzero $\delta$ directly influences the perceived privacy protection, leading to a more pronounced estimation of the privacy loss.
Finally, the impact of the domain size on the estimated privacy loss has a minor effect on the AGRR, ASUE, ABLH and AOLH protocols, with a decreasing $\epslb$ value for higher $k$. 
In contrast, for both GM and AGM protocols, the estimated $\epslb$ values increase (\ie, indicating less privacy) as $k$ increases.

\begin{figure}[!htb]
    \centering
    \begin{subfigure}{\columnwidth}
        \includegraphics[width=1.0\linewidth]{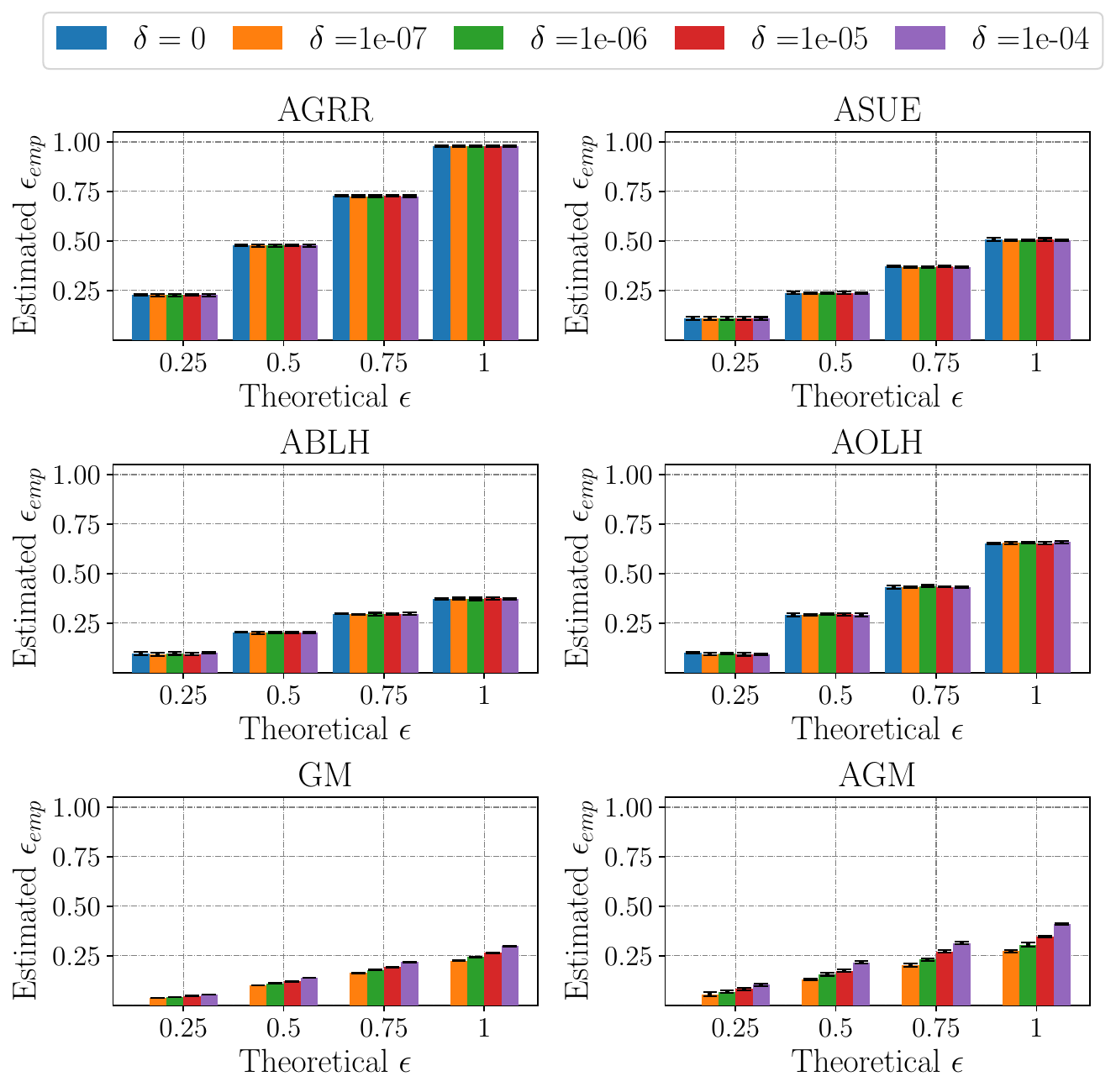}
        \caption{Domain size $k=25$.}
        \label{subfig:audit_delta_k25}
    \end{subfigure}
    \hfill
    \begin{subfigure}{\columnwidth}
        \includegraphics[width=1.0\linewidth]{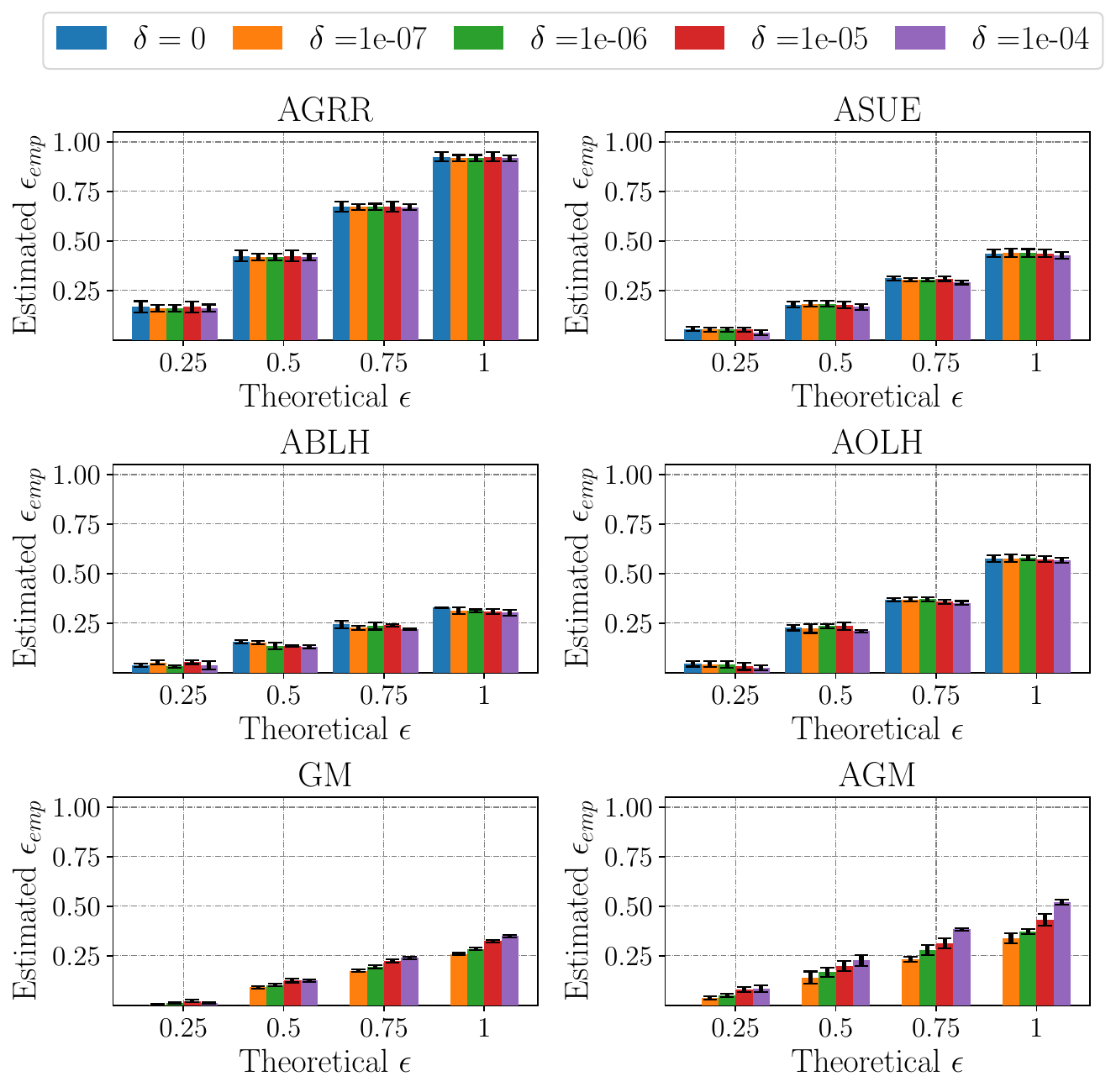}
        \caption{Domain size $k=200$.}
        \label{subfig:audit_delta_k200}
    \end{subfigure}
    \caption{Theoretical $\epsilon$ values (x-axis) versus estimated $\epslb$ values (y-axis) using our LDP-Auditor framework.
    We assess different privacy guarantees for six ($\epsdelta$)-LDP protocols across domain sizes $k \in \{25, 200\}$. 
    The special case $\delta=0$ corresponds to pure $\epsilon$-LDP, for which GM and AGM do not satisfy.}
    \label{fig:audit_delta}
\end{figure}

\subsection{Case Study \#2: Auditing the Privacy Loss of Local Hashing Encoding Without LDP} \label{sub:lh_audit}

As discussed previously in Section~\ref{sub:main_results}, both LH protocols present the least tight estimates for $\epslb$ in high privacy regimes.
Even worse, BLH's estimated privacy loss remains below $\epslb<1$ for $\epsilon\geq 2$, leading to empirical privacy losses $\leq 10$x of the theoretical $\epsilon$.
Motivated by these observations, we performed an additional study \textbf{to audit the impact of local hashing encoding but with no LDP perturbation} (\ie, $\epsilon=+\infty$), which we refer to as Local Hashing Only (LHO).
More precisely, the LHO reporting mechanism is $\mathrm{LHO}(v) \coloneqq \langle \mathrm{H}, \mathrm{H}(v) \rangle$, and we used the same distinguishability attack $\calA_{\mathrm{LH}}$ described in Section~\ref{sec:background} to attack LHO.
For these experiments, we use the following parameter values:

\begin{itemize}
    \item \textbf{LHO hash domain.} We vary the hash domain $[g]$ within the range $g \in \{2, 4, 6, 8, 10\}$.

    \item \textbf{Domain size.} We vary the domain size within the range $k \in \{25, 50, 100, 150, 200\}$. 
\end{itemize}

\begin{figure}[!htb]
    \centering
    \includegraphics[width=0.7\linewidth]{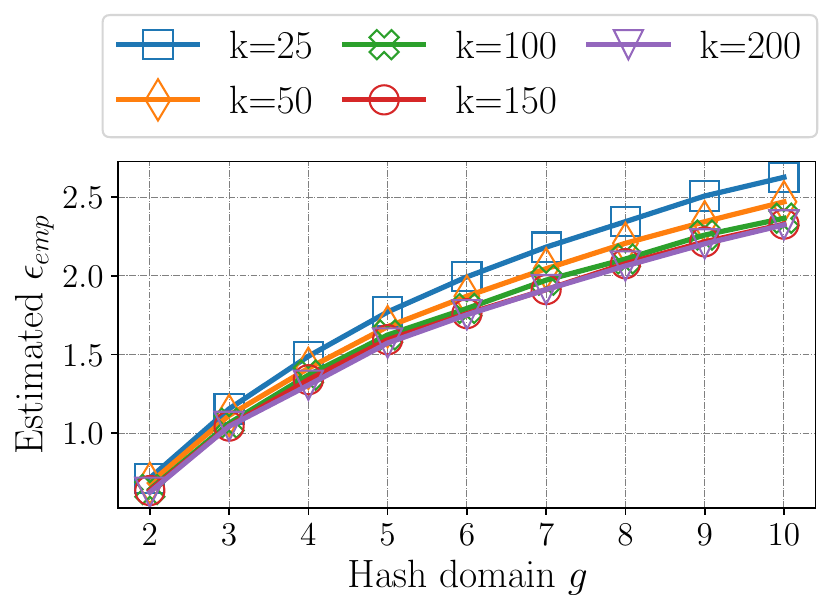}
    \caption{Estimated $\epslb$ (y-axis) versus hash domain $g$ (x-axis) using our LDP-Auditor framework comparing different domain sizes $k$ for LH encoding with no LDP randomization.}
    \label{fig:audit_lh}
\end{figure}

Figure~\ref{fig:audit_lh} presents the estimated $\epslb$ values (x-axis) for LHO protocols according to the hash domain sizes $g$ (y-axis) using our LDP-Auditor framework for different domain sizes $k$.
Observations from Figure~\ref{fig:audit_lh} underscore that, even for a binary hash domain ($g=2$), the estimated privacy loss remains $\epslb < 1$, aligning with high privacy regimes suitable for real-world applications.
Indeed, even though there is no LDP randomization of the hashed value $h \in \{0,1\}$, the adversary still has a random guess on the support set $\setOne_{\mathrm{LH}}$.
Given a general (universal) family of hash functions $\mathscr{H}$, each input value $v \in V$ is hashed into a value in $[g]$ by a hash function $\mathrm{H} \in \mathscr{H}$, and the universal property requires $\forall{v_1, v_2} \in V, v_1 \neq v_2 : \quad \underset{\mathrm{H} \in \mathscr{H}}{\Pr}\left[\mathrm{H}(v_1) = \mathrm{H}(v_2)  \right] \leq \frac{1}{g}$.
In other words, approximately $k/g$ values can be mapped to the same hashed value $h=\mathrm{H}(v)$ in $[g]$.
Although local hashing pre-processing by itself has no proven DP guarantees, this significant loss of information in the encoding step suggests potential privacy gains for LH protocols due to the presence of many random collisions.
In a similar context, DP-Sniper~\cite{bichsel2021dpsniper}, a method developed to finds violations of DP, also encountered difficulties estimating $\epsilon$ for the original RAPPOR~\cite{rappor}, which is based on Bloom filters and employs hash functions.

One could expect a similar privacy gain for other LDP mechanisms based on sketching such as Apple's Count-Mean Sketch (CMS)~\cite{apple} and Hadamard~\cite{Hadamard} mechanisms, which we leave as for future audit investigations.
Furthermore, as we increase the hash domain size $g>2$ without introducing any LDP perturbation, the estimated $\epslb$ starts to rise, achieving medium privacy regimes  $1< \epslb \leq 2.5$.
This outcome is expected since preserving more information during the encoding step decreases the support set size $|\setOne_{\mathrm{LH}}|$, which naturally enhances the accuracy of the distinguishability attack $\calA_{\mathrm{LH}}$.
Therefore, the estimated privacy loss $\epslb$ for LH-based protocols will be lower if the domain size $k$ is high and/or if the new hashed domain $g$ is small.

\subsection{Case Study \#3: Auditing the LDP Sequential Composition in Longitudinal Studies} \label{sub:audit_seq_comp_long}

As discussed in Section~\ref{sub:ldp_auditor_long}, we aim to audit the empirical privacy loss of LDP protocols in longitudinal studies (\ie, $\tau$ data collections).
This will allow to assess the gap between empirical local privacy loss estimation and the theoretical upper bound imposed by the (L)DP sequential composition.
For these experiments, we use both Algorithms~\ref{alg:ldp_auditor_lb} and~\ref{alg:attack_ldp_long} with the following parameter values:

\begin{itemize}
    \item \textbf{LDP protocols.} We audit the eight $\epsilon$-LDP protocols from Section~\ref{sub:pure_ldp_protocols}.
    Additionally, in light of the findings presented in Section~\ref{sub:audit_delta}, we only audit two ($\epsdelta$)-LDP protocols that exhibit sensitivity to $\delta$; namely, GM and AGM.
    
    \item \textbf{Number of data collections.} We vary the number of data collections in the range $\tau \in \{5, 10, 25, 50, 75, 100, 250, 500\}$.
    
    \item \textbf{Theoretical upper bound.} We vary the per-report privacy guarantee in high privacy regimes, in the range $\epsilon \in \{0.25, 0.5, 0.75, 1\}$.
    By the sequential composition, the theoretical upper bound after $\tau$ data collections is $\tau \epsilon$-LDP.

    \item \textbf{Delta parameter.} For approximate LDP, we set $\delta=1e^{-5}$.
    
    \item \textbf{Domain size.} We vary the domain size $k \in \{2, 25, 50, 100\}$. We present results for $k \in \{2, 100\}$ in the main paper and defer the others to Appendix~\ref{app:add_exp_audit_seq_comp_long}.
\end{itemize}

Figure~\ref{fig:audit_seq_comp_long} illustrates the estimated $\epslb$ values (y-axis) for the eight $\epsilon$-LDP and both GM and AGM ($\epsdelta$)-LDP protocols according to the the number of data collections $\tau$ (x-axis), per report $\epsilon$ and domain size $k \in \{2, 100\}$, using our LDP-Auditor framework. 
From Figure~\ref{subfig:audit_seq_comp_long_k2}, one can notice that both GRR and SS protocols have equal $\epslb$ estimates, as for $k=2$, the subset size $\omega=1$ (\ie, GRR).
These two LDP protocols exhibited the tightest empirical privacy estimates for $\epslb$, aligning with the observations made in Section~\ref{sub:main_results} (see Figure~\ref{fig:audit_pure_ldp}).
In contrast, the approximate LDP protocols, notably GM and AGM, showed less favorable estimates for privacy loss, which corroborates the findings illustrated in Figure~\ref{fig:audit_approx_ldp_protocols} in Section~\ref{sub:main_results}.
The remaining pure-LDP protocols -- SUE, OUE, BLH and OLH -- display intermediate privacy loss estimates.

Furthermore, Figure~\ref{subfig:audit_seq_comp_long_k100} reveals that, for a larger domain size of $k=100$, the results obtained are reversed.
Among pure-LDP protocols, GRR yields the lowest $\epslb$ estimation for all experimented $\tau$ values, followed by the SHE protocol.
The reason for this is that the probability of being ``honest'' $p=\frac{e^{\epsilon}}{e^{\epsilon} + k - 1}$ in Equation~\eqref{eq:grr}, is directly proportional to the domain size $k$.
Therefore, even after many data collections $\tau$, the adversary has still too much noisy data to filter, which makes the distinguishability attack less efficient.
Similar to Figure~\ref{fig:audit_seq_comp_long}, in Figure~\ref{fig:add_exp_audit_seq_comp_long}, approximate-LDP protocols (GM and AGM) led to the lowest empirical privacy loss estimates for $\epslb$.

\begin{figure*}[!htb]
    \centering
    \begin{subfigure}{\columnwidth}
        \includegraphics[width=\linewidth]{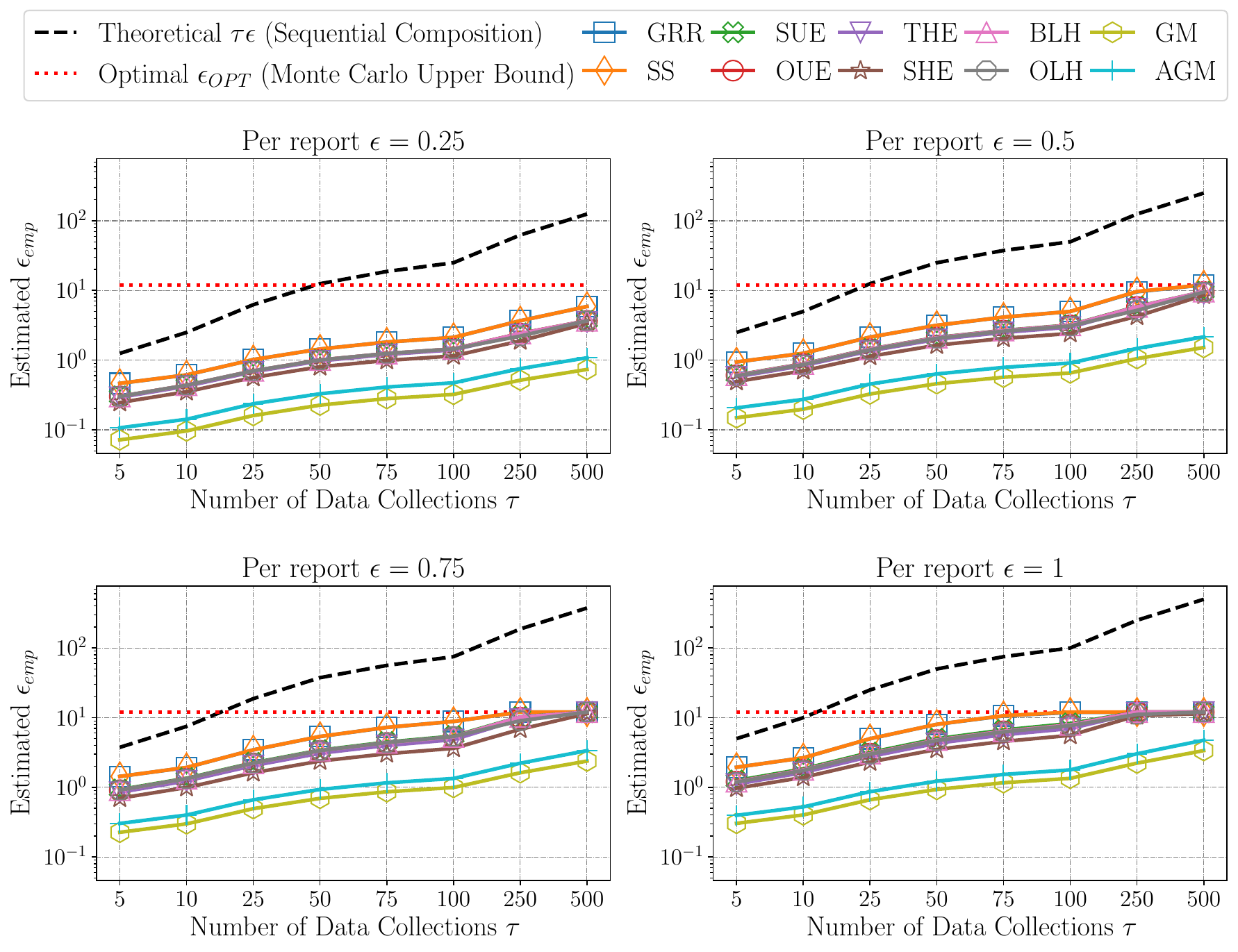}
        \caption{Domain size $k=2$.}
        \label{subfig:audit_seq_comp_long_k2}
    \end{subfigure}
    \hfill 
    \begin{subfigure}{\columnwidth}
        \includegraphics[width=\linewidth]{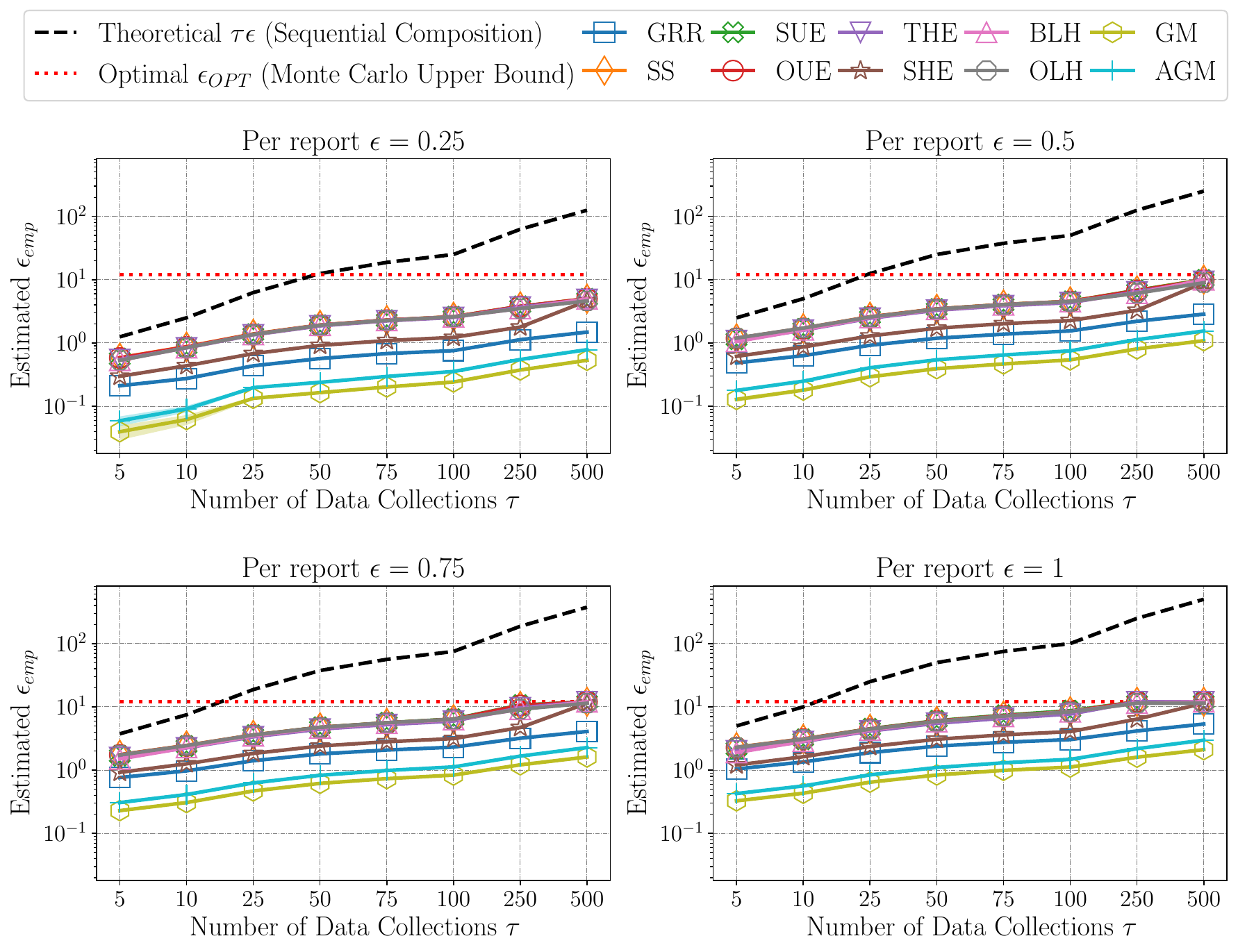}
        \caption{Domain size $k=100$.}
        \label{subfig:audit_seq_comp_long_k100}
    \end{subfigure}
    \caption{Estimated $\epslb$ (y-axis) versus the number of data collections $\tau$ (x-axis) using our LDP-Auditor framework for different domain sizes $k \in \{2, 100\}$. 
    We vary the per report $\epsilon$-LDP guarantee for the following LDP frequency estimation protocols: GRR, SS, SUE, OUE, BLH, OLH, SHE, THE, GM and AGM.
    For both approximate-LDP protocols, namely GM and AGM, $\delta=1e^{-5}$.}
    \label{fig:audit_seq_comp_long}
\end{figure*}

Moreover, from both Figure~\ref{fig:audit_seq_comp_long} and Figure~\ref{fig:add_exp_audit_seq_comp_long}, it is evident that even after $\tau=500$, none of the LDP protocols, achieves the optimal upper bound $\epsopt$ imposed by the Monte Carlo estimation when the per-report privacy guarantee is too small (\ie, $\epsilon=0.25$). 
However, as the number of data collections becomes sufficiently large (\ie, $\tau \geq 250$) and the privacy guarantee per report also increases (\eg, $\epsilon \geq 0.75$), all pure-LDP protocols, with the exception of GRR, manage to achieve the Monte Carlo upper bound, resulting in $\epslb=\epsopt$.
Yet, as the number of data collections becomes sufficiently large (\ie, $\tau \to \infty$), we anticipate that $\epslb$ will converge to $\epsopt$ for all LDP protocols even when the per-report $\epsilon<0.5$.

These results are quite surprising since one would imagine the privacy leakage to be higher for repeated data collections when random fresh noise is added per report.
Nevertheless, as the domain size increases, the performance of the distinguishability attack decreases~\cite{Gursoy2022,Arcolezi2023}.
As a consequence, for real-world deployments with substantial domain sizes (\eg, list of Internet domains), exclusively relying on theoretical $\epsilon$-LDP guarantees may prove unrealistic. 
Privacy auditing becomes imperative in such scenarios, to establish appropriate privacy parameters, thus avoiding adding more noise than required. 
\textit{Notably, these auditing results emphasize a crucial aspect for longitudinal studies: a substantial gap exists between theory (sequential composition) and practice (LDP auditing).}
To narrow this gap, one could consider designing more powerful attacks for longitudinal studies beyond those proposed here in Algorithm~\ref{alg:attack_ldp_long}. 
Alternatively, research efforts could be directed towards developing more sophisticated compositions for $\epsilon$-LDP mechanisms.

\subsection{Case Study \#4: LDP Auditing with Multidimensional Data} \label{sub:audit_multidimensional}

As discussed in Section~\ref{sub:ldp_auditor_multidimensional}, our audit results outlined in Section~\ref{sub:main_results} are also valid for LDP mechanisms based on the standard SPL and SMP solutions for multidimensional data.
Thus, in this section, we aim to audit LDP protocols following the RS+FD~\cite{Arcolezi_rs_fd} solution.
For these experiments, we use both Algorithms~\ref{alg:attack_rs+fd} and~\ref{alg:ldp_auditor_rs+fd}, considering:

\begin{itemize}
    \item \textbf{LDP protocols.} We audit five $\epsilon$-LDP RS+FD protocols: RS+FD[GRR], RS+FD[SUE-z], RS+FD[SUE-r], RS+FD[OUE-z] and RS+FD[OUE-r].
    The difference between UE-z and UE-r lies on how to generate the fake data~\cite{Arcolezi_rs_fd}.
    More precisely, UE-z initializes a zero-vector and UE-r initializes a random one-hot-encoded vector.
    Next, SUE or OUE is used to sanitize these vectors.

    \item \textbf{Theoretical upper bound.} We vary the theoretical privacy parameter $\epsilon$ in high, mid and low privacy regimes over the same range $\epsilon \in \{0.25, 0.5, 0.75, 1, 2, 4, 6, 10\}$ as in Section~\ref{sub:main_results}.
        
    \item \textbf{Domain size and number of attributes.} We vary the domain size as $k \in \{2, 25, 50, 100\}$ and we vary the number of attributes over $d\in \{2, 10\}$.
    When $d=2$, $\mathbf{k}=[2, 2]$, $\mathbf{k}=[25, 25]$, $\mathbf{k}=[50, 50]$ and $\mathbf{k}=[100, 100]$ and, in a similar way for $d=10$.
    We present results for $k \in \{2, 100\}$ in the main paper and defer the others to Appendix~\ref{app:add_exp_audit_multidimensional}.
\end{itemize}

\begin{figure*}[!h]
    \centering
    \begin{subfigure}{1\columnwidth}
        \includegraphics[width=\linewidth]{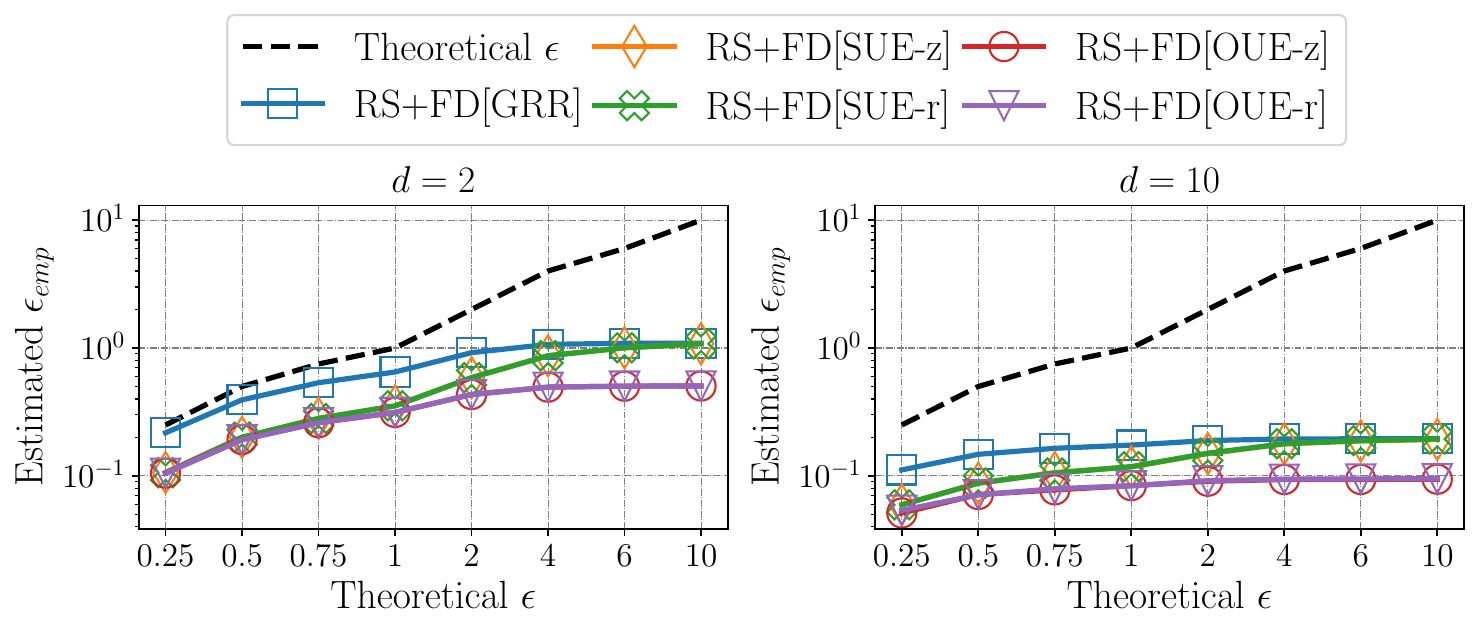}
        \caption{Domain size $k=2$.}
        \label{fig:audit_rs+fd_k2}
    \end{subfigure}
    \hfill 
    \begin{subfigure}{1\columnwidth}
        \includegraphics[width=\linewidth]{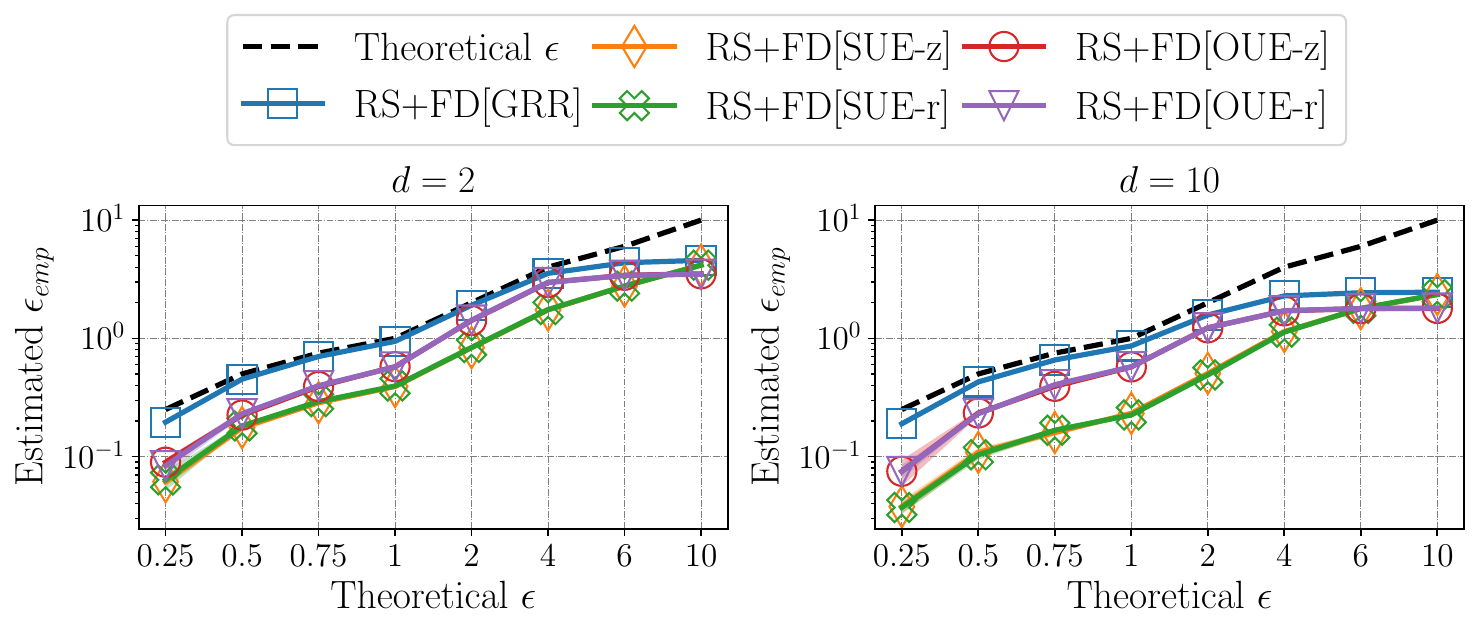}
        \caption{Domain size $k=100$.}
        \label{fig:audit_rs+fd_k100}
    \end{subfigure}
    \caption{Theoretical $\epsilon$ (x-axis) versus estimated $\epslb$ (y-axis) using our LDP-Auditor framework comparing different number of attributes $d$ for five RS+FD~\cite{Arcolezi_rs_fd} protocols with domain sizes $k=2$ and $k=100$.}
    \label{fig:audit_rs+fd}
\end{figure*}

Figure~\ref{fig:audit_rs+fd} illustrates the comparison of theoretical $\epsilon$ values (x-axis) with estimated $\epslb$ values (y-axis) for the five RS+FD protocols, based on the number of attributes $d$ and domain size $k \in \{2, 100\}$, utilizing our LDP-Auditor framework.
From Figure~\ref{fig:audit_rs+fd}, it is clear that, once again, GRR exhibits tighter empirical privacy losses $\epslb$ than UE-based protocols following the RS+FD solution. 
However, in contrast to Section~\ref{sub:main_results}, the estimated $\epslb$ for GRR now displays a ``plateau behaviour'' after theoretical $\epsilon\geq 4$. 
This plateau arises because the probability of reporting the true value under GRR reaches high values with $\epsilon\geq 4$. 
Notably, among the family of UE protocols, SUE demonstrates a tighter empirical $\epslb$ than OUE when the domain is binary (see Figure~\ref{fig:audit_rs+fd_k2}). 
However, SUE exhibits lower $\epslb$ than OUE when $k=100$ (see Figure~\ref{fig:audit_rs+fd_k100}). 
This observation can be attributed to the advantage of SUE in transmitting the true bit with a probability $p>\frac{1}{2}$, while OUE has $p=\frac{1}{2}$. 
Consequently, the distinguishability attack achieves higher accuracy for SUE, increasing the true positive rate and decreasing the false positive rate, resulting in higher $\epslb$ estimates.
Moreover, different fake data generation procedures for UE protocols (UE-z \emph{vs} UE-r) did not result in significant changes in the audit results.

Another intriguing result is that the empirical privacy loss is lower for a binary domain compared to when $k=100$. 
This behavior is primarily due to the impact of fake data on distinguishability attacks. 
\textit{In a binary domain, fake data significantly increases the false positive rate, leading to a decrease in the estimated privacy loss $\epslb$.}
However, for a higher domain size, fake data has a lesser impact on the false positive rate, as the distinguishability attack has more rooms for errors.
Overall, these nuanced relationships underscore the intricate interplay between domain size, the use of fake data and the tightness of local privacy loss estimation in the context of RS+FD protocols.

\subsection{Case Study \#5: Debugging a Python Implementation of UE Protocols} \label{sub:debugging}

Finally, we show how our LDP-Auditor framework can also serve as a tool for verifying the correctness of LDP implementations. 
In our case study, we focused on the pure-LDP~\cite{pureldp} package (version 1.1.2) and show that their UE protocols fail to meet the claimed level of $\epsilon$-LDP.
Our objective here is not to point out issues with respect to a particular code or library but rather to demonstrate the potentiality of our approach for verifying and debugging LDP protocols.
Following a similar experimental setup as the one outlined in Section~\ref{sub:main_results}, Figure~\ref{fig:audit_pure_ldp} presents a comparison of the theoretical $\epsilon$ values (x-axis) with the estimated $\epslb$ values (y-axis) using our LDP-Auditor framework. 
We consider different domain sizes $k$ for both the SUE and OUE protocols, implemented in the pure-LDP package. 
The inconsistencies we found between the lower and upper bounds are highlighted within the \textcolor{orange}{orange rectangle}.

\begin{figure}[h!]
    \centering
    \includegraphics[width=1\linewidth]{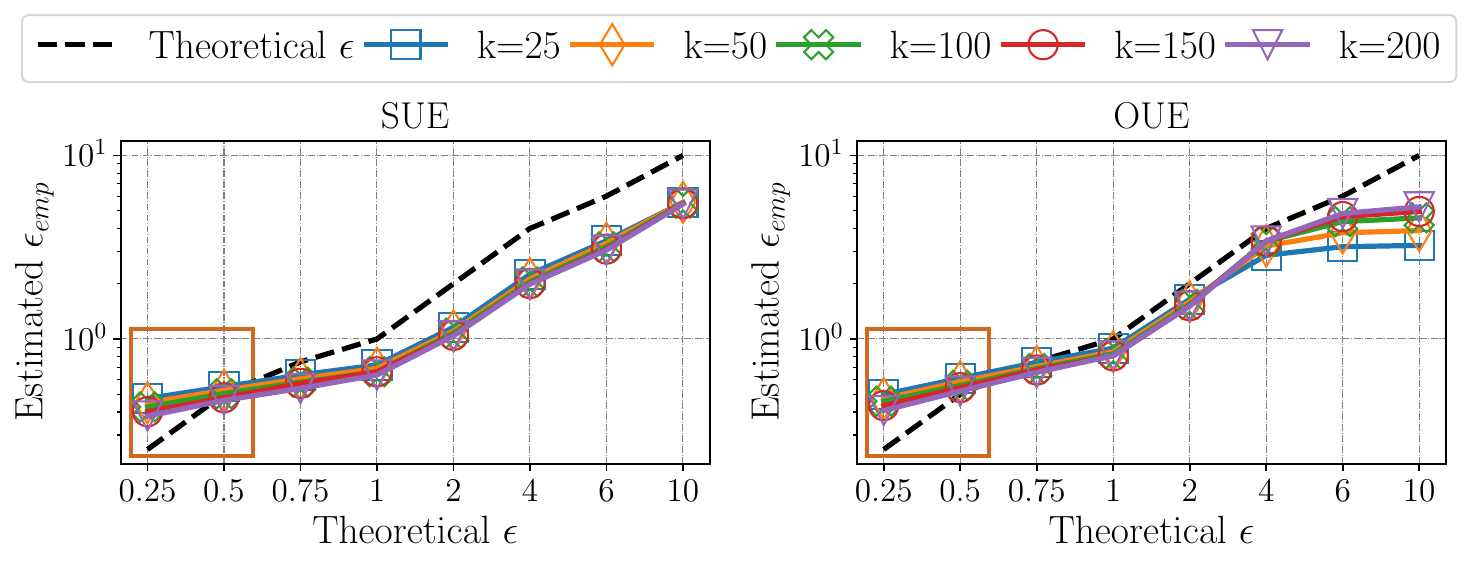}
    \caption{Theoretical $\epsilon$ (x-axis) versus estimated $\epslb$ (y-axis) using our LDP-Auditor framework comparing different domain sizes $k$ for both SUE and OUE protocols, implemented in the pure-LDP package~\cite{pureldp}. 
    The orange rectangle highlights inconsistencies between the observed empirical privacy loss and the theoretical upper bound.}
    \label{fig:audit_pure_ldp}
\end{figure}

From Figure~\ref{fig:audit_pure_ldp}, it is clear that LDP-Auditor has detected inconsistencies between the lower and upper bounds, which are highlighted by the orange rectangle. 
After conducting an investigation into the pure-LDP code, we were able to identify the specific location of the implementation error. 
The error arises from the following steps in the \texttt{\_perturb} function of the \texttt{UEClient} class:

\begin{enumerate}

    \item The user initializes a zero-vector $\textbf{y}=[0, 0, \ldots, 0]$ of size $k$;
    
    \item The user samples indexes of values in $\textbf{y}$ that will flip from $0$ to $1$ with probability $q$ (as indicated in Equation~\eqref{eq:ue_parameters}).

    \item With probability $p$ (as indicated in Equation~\eqref{eq:ue_parameters}), the index at position $\textbf{y}_v$ (representing the user's true value) is flipped from $0$ to $1$.

    \item \textcolor{teal}{\textbf{*Missing step*:}} if $\textbf{y}_v$ was set to $1$ in step (2) but not in step (3), there should be a correction to revert it back to $0$.
\end{enumerate}

This was a simple mistake that \textbf{\textit{was directly fixed by the authors~\cite{fix_bug_pure_ldp} following our communication with them.}}
However, it is crucial to emphasize that this minor error had implications for the $\epsilon$-LDP guarantees. 
Specifically, the bit corresponding to the user's value was transmitted more time than intended, particularly in high privacy regimes.
In mid to low privacy regimes, the bug might go unnoticed, given the already high probability of transmitting the bit as 1.
This explains why LDP-Auditor failed to detect inconsistencies between the empirical and upper bounds for $\epsilon \geq 1$. 
In such cases, specialized tools designed for identifying DP violations, like DP-Sniper~\cite{bichsel2021dpsniper}, would likely have been effective in detecting the bug.
Therefore, we strongly encourage end-users of the pure-LDP package to update to the latest version 1.2.0.

\section{Conclusion and Perspectives} 
\label{sec:conclusion}

In this work, we have introduced the LDP-Auditor framework as a powerful tool for empirically estimating the privacy loss of LDP frequency estimation protocols. 
Our main LDP audit results provide new insights into the empirical local privacy loss in practical adversarial settings.
Through several case studies, we have demonstrated the framework's effectiveness in identifying significant discrepancies between theoretical guarantees and empirical privacy loss. 
These findings contribute to a nuanced understanding of the challenges and considerations in the design and implementation of LDP mechanisms. 
As LDP continues to gain prominence in privacy-preserving data analysis, LDP-Auditor can serve as a valuable resource for practitioners and researchers aiming to assess and enhance the privacy guarantees of their systems.

Nevertheless, while we instantiated LDP-Auditor with distinguishability attacks on the user's value~\cite{Gursoy2022, Arcolezi2023}, our future plans involve expanding the scope of LDP auditing to incorporate other adversarial analysis proposed in the literature, such as inference pool~\cite{Gadotti2022}, data change detection~\cite{Arcolezi2023evolving} and re-identification attacks~\cite{Murakami2021, Arcolezi2023}. 
We also aim and suggest extending LDP-Auditor to encompass a wider range of LDP applications (\eg, mean estimation) as these may introduce unique challenges and considerations during the auditing process.
Additionally, we aim at integrating the Neyman-Pearson lemma into LDP-Auditor's analysis to leverage its theoretical foundation to enhance the precision of our auditing framework.
Lastly, one can envision utilizing LDP-Auditor as a means to establish a unified local privacy loss $\epslb$ when comparing mechanisms of different locally private definitions, such as $d$-privacy~\cite{chatzikokolakis2013broadening}, $\alpha$-PIE~\cite{Murakami2021} and LDP~\cite{first_ldp}.

\begin{acks}
We thank Catuscia Palamidessi, Aurélien Bellet and Mathias Lécuyer for their helpful discussions and feedback throughout this project. 
The authors also deeply thank the anonymous PETS reviewers for their insightful suggestions.
This work has been partially supported by the ``ANR 22-PECY-0002'' IPOP (Interdisciplinary Project on Privacy) project of the Cybersecurity PEPR.
The work of H\'eber H. Arcolezi was partially supported by the European Research Council (ERC) project HYPATIA under the European Union’s Horizon 2020 research and innovation programme. Grant agreement n. 835294.
Sébastien Gambs is supported by the Canada Research Chair program as well as a Discovery Grant from NSERC. 
\end{acks}

\bibliographystyle{ACM-Reference-Format}
\bibliography{ref.bib}


\appendix

\clearpage
\section{Summary of Notations} \label{app:notation}
The main notation used in this paper is summarized in Table~\ref{tab:notation}.

\begin{table}[!h]
    \centering
    \begin{tabular}{c l}
    \toprule
     Symbol                 & Description \\
     \toprule
     $[a]$                  & Set of integers $\{1, 2, 3, \ldots, a\}$. \\
     $\mathbf{a}_i$         & $i$-th coordinate of vector $\mathbf{a}$. \\
     $V$                    & Data domain. \\
     $k$                    & Domain size $k=|V|$. \\
     $n$                    & Number of users. \\
     $\epsilon$             & Theoretical privacy loss. \\
     $\delta$               & Maximum probability that privacy loss exceeds $\epsilon$. \\
     $\epslb$               & Empirical privacy loss. \\
     $\epsopt$              & Upper bound on Monte Carlo privacy loss. \\
     $\calM$                & ($\epsdelta$)-LDP mechanism. \\
     $\calA$                & Distinguishability attack. \\
     $\calA^{L}$            & Distinguishability attack in longitudinal study. \\
     $\calA^{\textrm{RS+FD}}$ & Distinguishability attack on RS+FD protocols. \\
     $\calA_{\calM}$        & Distinguishability attack of mechanism $\calM$. \\
     $\setOne_{\calM}$      & Support set of mechanism $\calM$. \\     
     $T$                    & Number of trials. \\
     $\alpha$               & Confidence level. \\
     $d$                    & Number of attributes $d\geq 2$.\\
     $\tau$                 & Number of data collections.\\
     \bottomrule
    \end{tabular}
    \caption{Symbols and Notations.}
    \label{tab:notation}
\end{table}

\section{Approximate ($\epsdelta$)-LDP Protocols} \label{app:approximate_ldp_protocols_detailed}

\textbf{Approximate GRR (AGRR)~\cite{Wang2021_approx_ldp}.} Similar to GRR in Section~\ref{sub:pure_ldp_protocols}, given a value $v \in V$, $\mathrm{AGRR}(v)$ outputs the true value $v$ with probability $p$, and any other value $v' \in V \setminus \{v\}$, otherwise. 
More formally:
\begin{equation} \label{eq:agrr}
    \Pr[\mathrm{AGRR}(v)=y] = \begin{cases} p=\frac{e^{\epsilon} + (k - 1)\delta}{e^{\epsilon}+k-1} \textrm{ if } y = v,\\ q=\frac{1 - \delta}{e^{\epsilon}+k-1} \quad \textrm{ if } y \neq v \textrm{,} \end{cases}
\end{equation}

\noindent in which $y \in V$ is the perturbed value sent to the server. 
From Equation~\eqref{eq:agrr}, $\Pr[y=v] > \Pr[y=v']$ for all $v' \in V \setminus \{v\}$. 
Thus, the attack strategy $\calA_{\mathrm{AGRR}}$ is equivalent to $\calA_{\mathrm{GRR}}$, \ie, to predict $\hat{v}=y$.

\textbf{Approximate SUE (ASUE)~\cite{Wang2021_approx_ldp}.}
Similar to the SUE protocol~\cite{rappor} in Section~\ref{sub:pure_ldp_protocols}, ASUE encode the user's input data $v \in V$, as a one-hot $k$-dimensional vector.
The obfuscation function of ASUE randomizes the bits from $\textbf{v}$ independently to generate $\textbf{y}$ as follows:
\begin{equation}  \label{eq:asue_parameters}
    \forall{i \in [k]} : \quad \Pr[\textbf{y}_i=1] =\begin{cases} p = \frac{e^\epsilon - \sqrt{e^\epsilon(1 - \delta) + \delta}}{e^\epsilon - 1}, \textrm{ if } \textbf{v}_i=1 \textrm{,} \\ q = \frac{\sqrt{e^\epsilon(1 - \delta) + \delta} - 1}{e^\epsilon - 1}, \textrm{      if } \textbf{v}_i=0 \textrm{,}\end{cases}
\end{equation}

\noindent in which $\textbf{y}$ is sent to the server.
As for UE protocols, with $\textbf{y}$, the adversary can construct the subset of all values $v \in V$ that are set to 1, \ie, $\setOne_{\mathrm{AUE}}=\{v | \textbf{y}_v = 1\}$.
Then, the attack strategy $\calA_{\mathrm{ASUE}}$ is equivalent to $\calA_{\mathrm{AUE}}$:
\begin{itemize}
    \item $\calA^0_{\mathrm{ASUE}}$ is a random choice $\hat{v}=\mathrm{Uniform}\left( [k] \right)$, if $\setOne_{\mathrm{AUE}}=\emptyset$;
    
    \item $\calA^1_{\mathrm{ASUE}}$ is a random choice $\hat{v}=\mathrm{Uniform}\left( \setOne_{\mathrm{AUE}} \right)$, otherwise.
\end{itemize}

\textbf{Approximate LH (ALH)~\cite{Wang2021_approx_ldp}.} Similar to the LH protocols~\cite{tianhao2017,Bassily2015} in Section~\ref{sub:pure_ldp_protocols}, ALH uses a hash function $\mathrm{H} \in \mathscr{H}$ to map the input data $v \in V$ to a new domain of size $g \geq 2$, and then apply AGRR to the hashed value $h=\mathrm{H}(v)$.
In particular, the ALH reporting mechanism is $\mathrm{ALH}(v) \coloneqq \langle \mathrm{H}, \mathrm{AGRR}(h) \rangle \textrm{,}$ in which $\mathrm{AGRR}$ is given in Equation~\eqref{eq:agrr} while operating on the new domain $[g]$. 
The two variants of ALH protocols are: (1) Approximate BLH (ABLH), which sets $g=2$ and (2) Approximate OLH (AOLH), which sets $g = \frac{-3e^\epsilon \delta - \sqrt{e^\epsilon - 1}\sqrt{(1 - \delta)(e^\epsilon + \delta - 9e^\epsilon \delta - 1)} + e^\epsilon + 3\delta - 1}{2\delta}$.
Each user reports the hash function and obfuscated value $\langle \mathrm{H}, y \rangle$ to the server. 
With these elements, the adversary can construct the subset of all values $v \in V$ that hash to $y$, \ie, $\setOne_{\mathrm{ALH}}= \{v | \mathrm{H}(v) = y\}$.
Then, the attack strategy $\calA_{\mathrm{ALH}}$ is equivalent to $\calA_{\mathrm{LH}}$:
\begin{itemize}
    \item $\calA^0_{\mathrm{ALH}}$ is a random choice $\hat{v}=\mathrm{Uniform}\left( [k] \right)$, if $\setOne_{\mathrm{ALH}}=\emptyset$;
    
    \item $\calA^1_{\mathrm{ALH}}$ is a random choice $\hat{v}=\mathrm{Uniform}\left( \setOne_{\mathrm{ALH}} \right)$, otherwise.
\end{itemize}

\section{Adversarial Privacy Game} \label{app:adv_priv_game}
Figure~\ref{fig:adversarial_privacy_game} provides a comparative illustration of the adversarial privacy game in central and local differential privacy frameworks, highlighting scenarios of membership inference and value distinguishability attacks, respectively.

\begin{figure}[!htb]
    \centering
    \begin{subfigure}{\columnwidth}
        \centering 
        \includegraphics[width=1\linewidth]{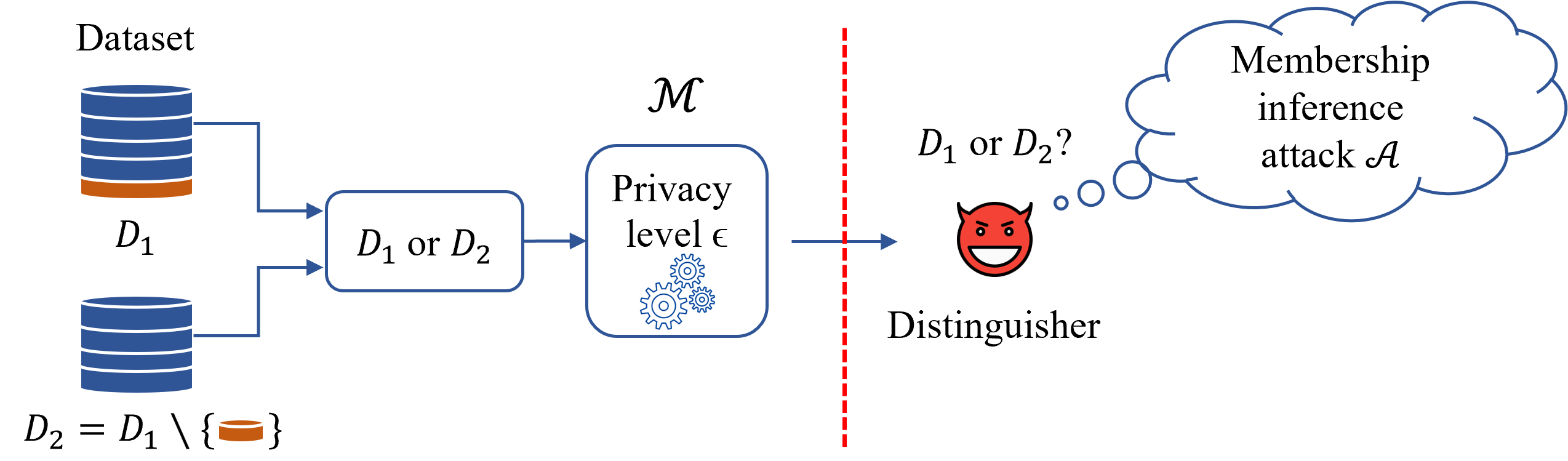}
        \caption{Central DP with membership inference attack.}
        \label{subfig:adversary_central_dp}
    \end{subfigure}
    \vspace{1em}     
    \begin{subfigure}{\columnwidth}
        \centering 
        \includegraphics[width=1\linewidth]{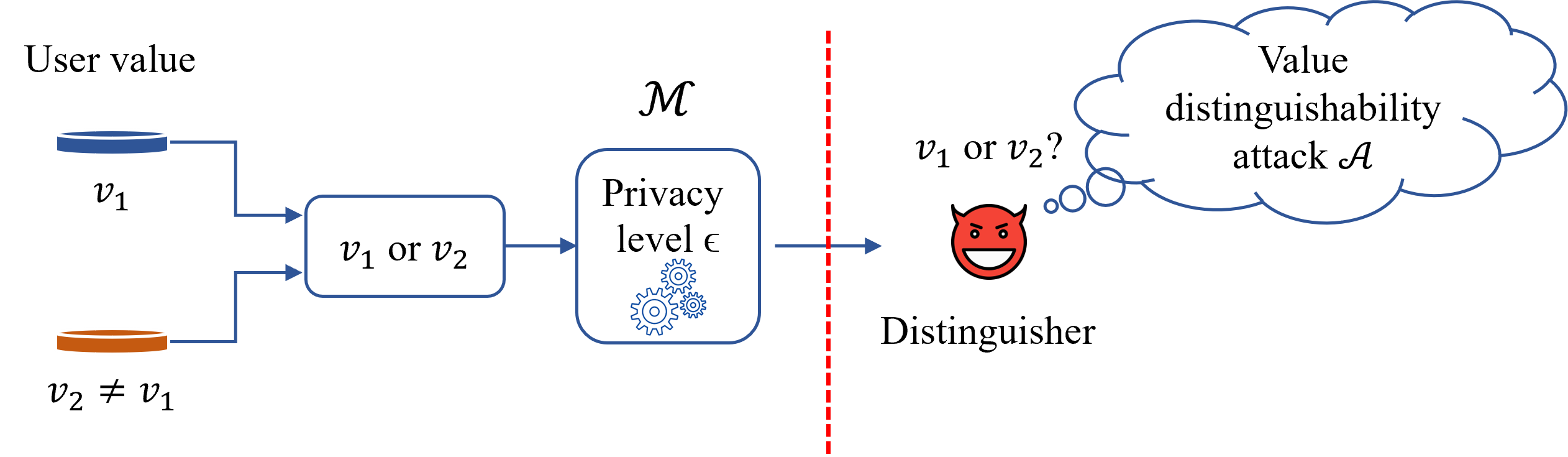}
        \caption{Local DP with value distinguishability attack.}
        \label{subfig:adversary_local_dp}
    \end{subfigure}
    
    \caption{Comparison of the adversarial privacy game between the central and local DP settings.}
    \label{fig:adversarial_privacy_game}
\end{figure}

\section{Clopper-Pearson Interval} \label{app:clopper_pearson}

The Clopper-Pearson method~\cite{clopper1934use} is a statistical technique used to calculate exact confidence intervals for the success probability in binomial distributions. 
This method is known for its conservative nature, ensuring that the confidence interval computed does not rely on any asymptotic approximations and is therefore valid regardless of the sample size. 
Given $x$ successes in $T$ trials, the Clopper-Pearson interval computes the lower and upper confidence limits for the true probability of success, based on the beta distribution's cumulative density function.
Specifically, the Clopper-Pearson confidence interval is computed as follows:

\begin{equation}
     \left[ \mathfrak{B}\left(\frac{\alpha}{2}; x, T - x + 1\right), \mathfrak{B}\left(1 - \frac{\alpha}{2}; x + 1, T - x\right) \right]   \textrm{,}
\end{equation}

\noindent in which $\mathfrak{B}$ denotes the beta distribution quantile function, $x$ is the number of observed successes, $T$ is the total number of trials, $\alpha$ represents the significance level and $\mathfrak{B} (p ; z, w)$ is the $p$-th quantile from a beta distribution with shape parameters $z$ and $w$. 
This exact method is crucial in our LDP auditing framework, as it allows us to establish the lower and upper bounds for the true positive rate and false positive rate of Equation~\eqref{eq:ldp_audit} with high confidence, ensuring that our empirical privacy loss estimations are both accurate and robust.

\section{Proof of Theorem~\ref{theorem:ldp_auditor}} \label{app:proof_theorem_1}
\begin{proof}[Proof of Theorem~\ref{theorem:ldp_auditor}]
First, the guarantee of the Clopper-Pearson confidence intervals is that, with probability at least $1 - \alpha$, $\hat{p}_0 \leq p_0$ and $\hat{p}_1 \geq p_1$, which implies $p_0/p_1 \geq \hat{p}_0/\hat{p}_1$. 
Second, if $\calM$ is ($\epsdelta$)-LDP, then we would have $p_0 \leq p_1 e^{\epsilon} + \delta$, meaning $\calM$ is not $(\epsilon',\delta)$-LDP for any $\epsilon' < \ln((p_0 - \delta)/p_1)$. 
Combining the two statements, $\calM$ is not $\epsilon'$ for any $\epsilon' < \ln((\hat{p}_0 - \delta)/\hat{p}_1) = \epslb$.
\end{proof}

\section{Memoization-Based LDP Protocols} \label{app:memoization_ldp}

As mentioned in Section~\ref{sub:ldp_auditor_long}, in longitudinal studies, the privacy loss is linear on the number of data collections $\tau$ following the DP sequential composition.
This accumulation allows attackers to employ ``averaging attacks'' to more easily distinguish a user's true value among the noisy data. 
To counteract this, renowned LDP mechanisms for longitudinal studies, such as RAPPOR~\cite{rappor} and $d$BitFlipPM~\cite{microsoft}, incorporate a \textit{memoization-based} strategy.

One way to employ memoization is to memorize an obfuscated value $y=\calM(v)$ and consistently reuse it throughout time~\cite{microsoft,Arcolezi2021}.
Specifically, at each time $t \in [\tau]$, the user reports the memorized $y$, which satisfies $\epsilon$-LDP. 
Note that as there is only a single obfuscation round, our LDP-Auditor operates equivalently to auditing in a single data collection scenario (\ie, Algorithm~\ref{alg:ldp_auditor_lb}).

An alternative memoization technique involves re-using the memorized obfuscated value $y=\calM(v)$ as the input for a subsequent round of obfuscation~\cite{rappor,Vidal2020,Arcolezi2022,Arcolezi2023}.
This means that at each time $t \in [\tau]$ the user reports $y^t=\calM(y)$; note that the input to $\calM$ is an already obfuscated value $y$.
In this setting, there are two levels of privacy guarantees~\cite{rappor}: $\epsilon_{1}$, which is the privacy level of the first report $y^1=\calM(y)$ following the second obfuscation round, and $\epsilon_{\infty}$, which is the privacy guarantee offered by the first obfuscation round that generated $y$.
More precisely, $y=\calM(v)$ satisfy $\epsilon_{\infty}$-LDP because it establishes the upper bound for the privacy leakage as an adversary could only recover $y$ instead of $v$ after executing an ``averaging attack'' across an indefinite number of reports $y^1, y^2, \ldots, y^{\infty}$.
Consequently, our LDP-Auditor framework described in Algorithm~\ref{alg:ldp_auditor_lb} can be deployed directly to estimate an empirical privacy loss $\epslb$ against the theoretical upper bound $\epsilon_{1}$ for a single data collection.
For $t \to \infty$ data collections, the theoretical upper bound becomes $\epsilon_{\infty}$, for which the distinguishability attack in longitudinal study $\calA^{L}$ outlined in Algorithm~\ref{alg:attack_ldp_long} should be applied.
In other words, while in Section~\ref{sub:audit_seq_comp_long} the upper bound is $\tau \epsilon$-LDP, for memoization-based mechanisms with two obfuscation rounds, the upper bound is $\epsilon_{\infty}$-LDP.

\section{Additional Experiments} \label{app:add_exp}

\subsection{Case Study \#1: Auditing the Impact of $\delta$} \label{app:add_exp_impact_delta}

Following the experimental setup detailed in Section~\ref{sub:audit_delta}, Figure~\ref{fig:appendix_audit_delta} illustrates the theoretical $\epsilon$ values (x-axis) versus estimated $\epslb$ values (y-axis) when varying the $\delta$ parameter and domain size $k \in \{100, 150\}$, using our LDP-Auditor framework. 
Note that for both GM and AGM protocols, there is no $\epslb$ value when $\delta = 0$, as these protocols do not have pure $\epsilon$-LDP variations.
Finally, a similar trend as in Figure~\ref{fig:audit_delta} can be observed in Figure~\ref{fig:appendix_audit_delta}, for which the discussion in Section~\ref{sub:audit_delta} is equally applicable to these results.

\begin{figure*}[h!]
    \centering
    \begin{subfigure}{\columnwidth}
        \includegraphics[width=1.0\linewidth]{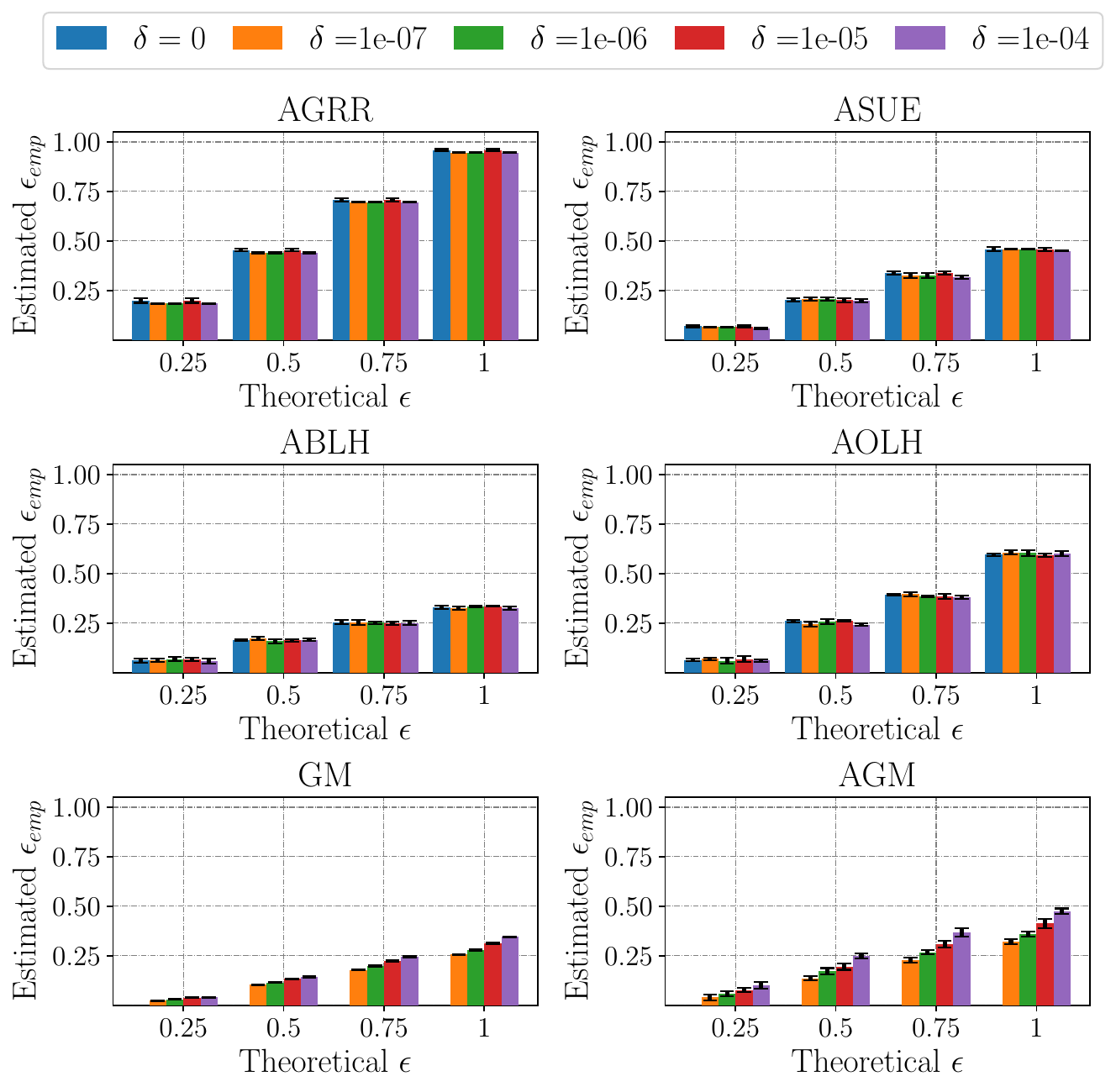}
        \caption{Domain size $k=100$.}
        \label{subfig:audit_delta_k100}
    \end{subfigure}
    \hfill
    \begin{subfigure}{\columnwidth}
        \includegraphics[width=1.0\linewidth]{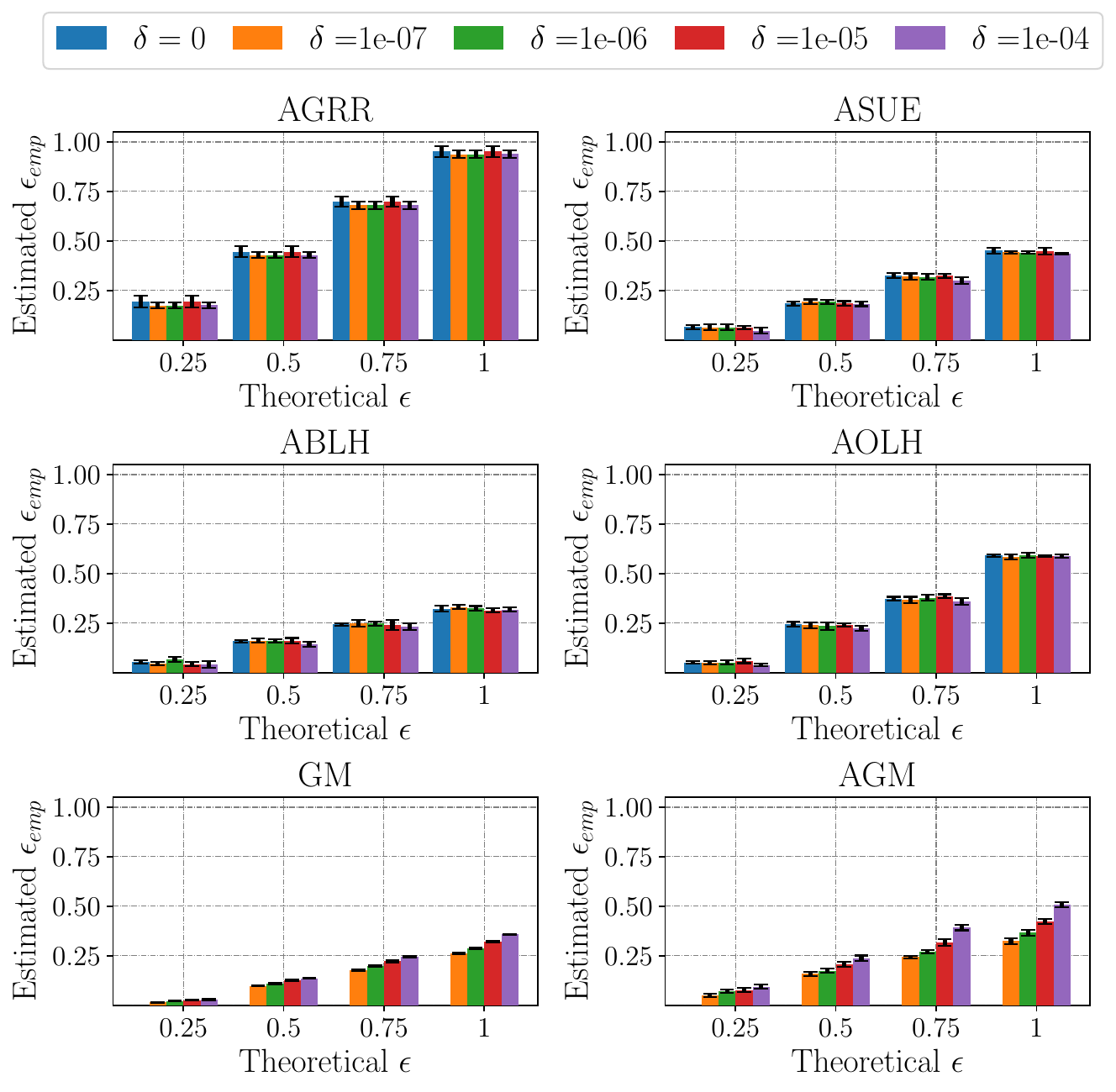}
        \caption{Domain size $k=150$.}
        \label{subfig:audit_delta_k150}
    \end{subfigure}
    \caption{Theoretical $\epsilon$ values (x-axis) versus estimated $\epslb$ values (y-axis) using our LDP-Auditor framework.
    We assess different privacy guarantees for six ($\epsdelta$)-LDP protocols across domain sizes $k \in \{100, 150\}$. 
    The special case $\delta=0$ corresponds to pure $\epsilon$-LDP, for which GM and AGM do not satisfy.}
    \label{fig:appendix_audit_delta}
\end{figure*}

\subsection{Case Study \#3: Auditing the LDP Sequential Composition in Longitudinal Studies} \label{app:add_exp_audit_seq_comp_long}

Following the experimental setup detailed in Section~\ref{sub:audit_seq_comp_long}, Figure~\ref{fig:add_exp_audit_seq_comp_long} illustrates the estimated $\epslb$ values (y-axis) for the eight $\epsilon$-LDP and both GM and AGM ($\epsdelta$)-LDP protocols according to the the number of data collections $\tau$ (x-axis), per report $\epsilon$ and domain size $k \in \{25, 50\}$, using our LDP-Auditor framework. 
Notice that a similar trend as in Figure~\ref{fig:audit_seq_comp_long} can be observed in Figure~\ref{fig:add_exp_audit_seq_comp_long}, for which the discussion in Section~\ref{sub:audit_seq_comp_long} is equally applicable to these results.

\begin{figure*}[!htb]
    \centering
    \begin{subfigure}{\columnwidth}
        \includegraphics[width=\linewidth]{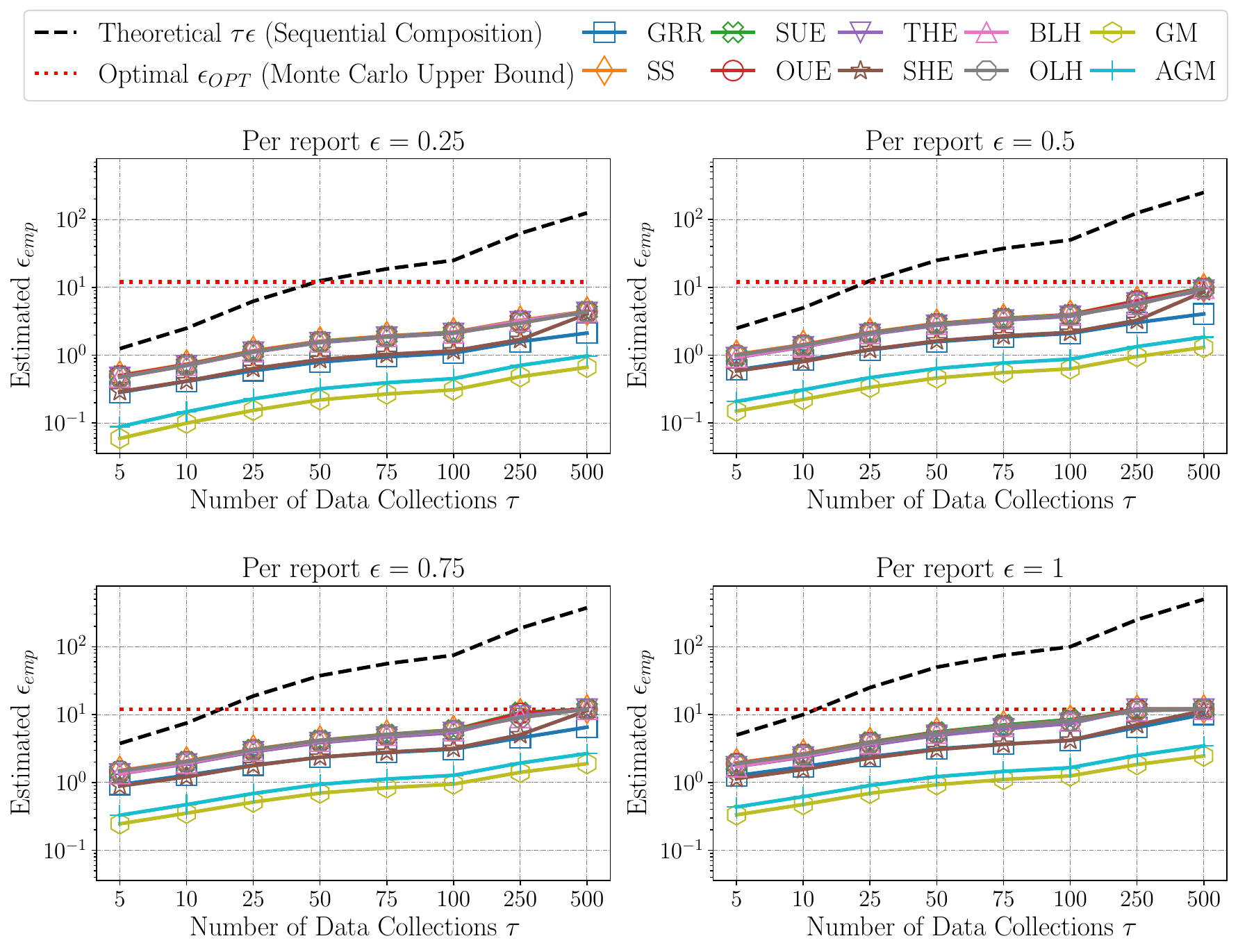}
        \caption{Domain size $k=25$.}
        \label{subfig:audit_seq_comp_long_k25}
    \end{subfigure}
    \hfill 
    \begin{subfigure}{\columnwidth}
        \includegraphics[width=\linewidth]{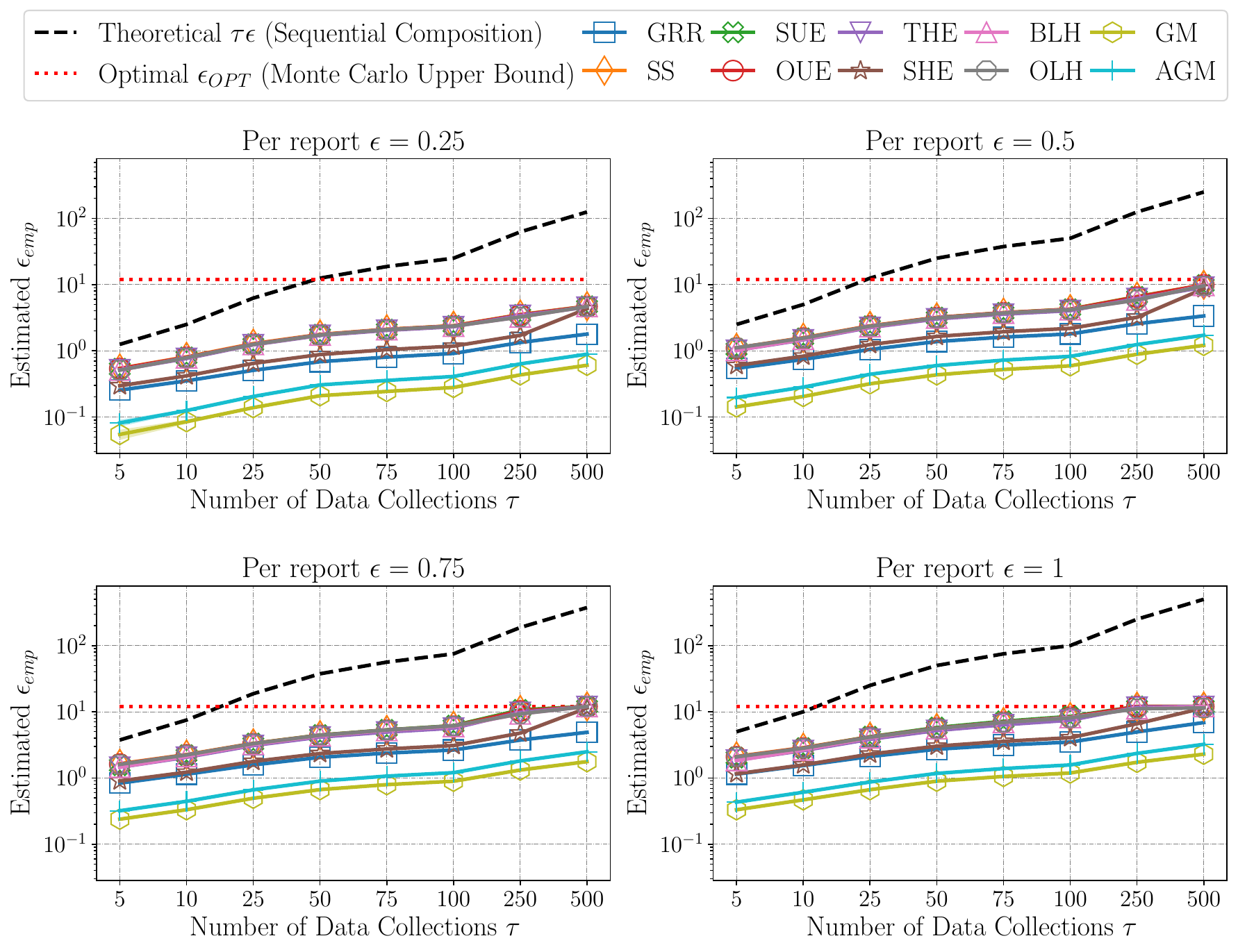}
        \caption{Domain size $k=50$.}
        \label{subfig:audit_seq_comp_long_k50}
    \end{subfigure}
    \caption{Estimated $\epslb$ (y-axis) versus the number of data collections $\tau$ (x-axis) using our LDP-Auditor framework for different domain sizes $k \in \{25, 50\}$. 
    We vary the per report $\epsilon$-LDP guarantee for the following LDP frequency estimation protocols: GRR, SS, SUE, OUE, BLH, OLH, SHE, THE, GM and AGM.
    For both approximate-LDP protocols, namely GM and AGM, $\delta=1e^{-5}$.}
    \label{fig:add_exp_audit_seq_comp_long}
\end{figure*}

\subsection{Case Study \#4: LDP Auditing with Multidimensional Data} \label{app:add_exp_audit_multidimensional}

Following the experimental setup detailed in Section~\ref{sub:audit_multidimensional}, Figure~\ref{fig:add_exp_audit_rs+fd} illustrates the comparison of theoretical $\epsilon$ values (x-axis) with estimated $\epslb$ values (y-axis) for the five RS+FD protocols, based on the number of attributes $d$ and domain size $k \in \{25, 50\}$, utilizing our LDP-Auditor framework.
Notice that a similar trend as in Figure~\ref{fig:audit_rs+fd} can be observed in Figure~\ref{fig:add_exp_audit_rs+fd}, for which the discussion in Section~\ref{sub:audit_multidimensional} is equally applicable to these results.

\begin{figure*}[!h]
    \centering
    \begin{subfigure}{1\columnwidth}
        \includegraphics[width=\linewidth]{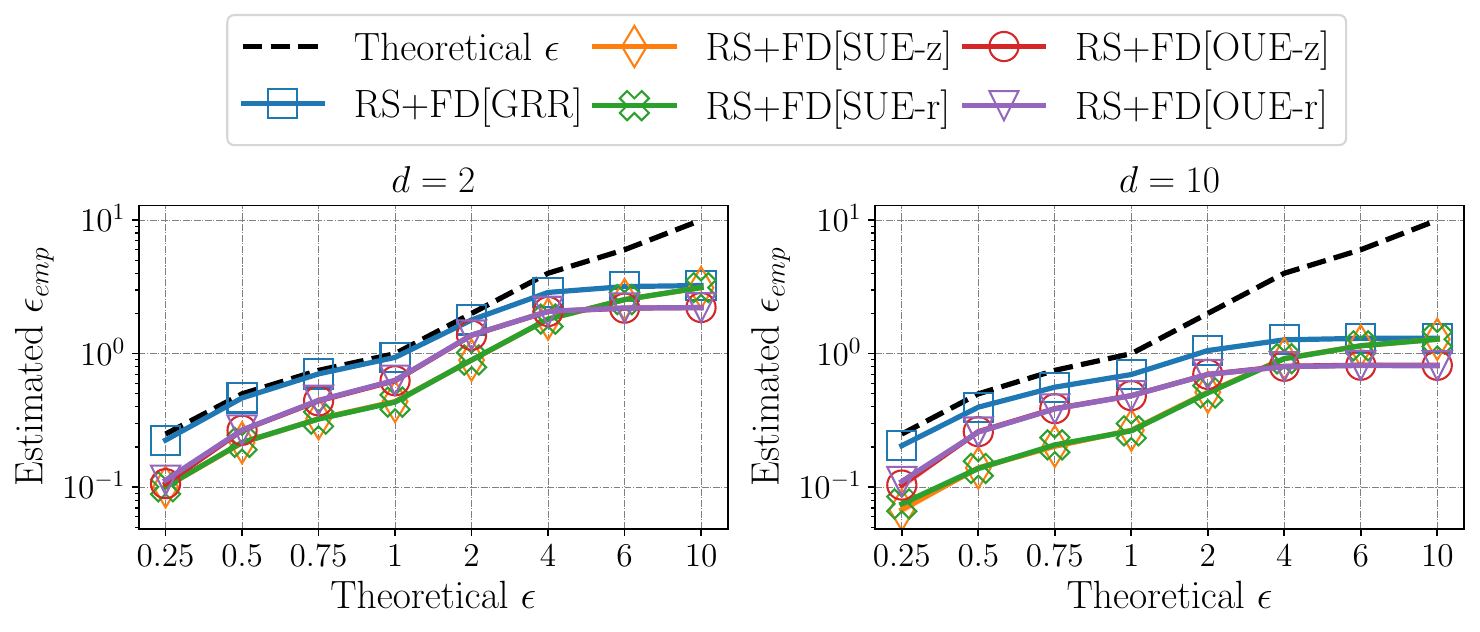}
        \caption{Domain size $k=25$.}
        \label{fig:audit_rs+fd_k25}
    \end{subfigure}
    \hfill 
    \begin{subfigure}{1\columnwidth}
        \includegraphics[width=\linewidth]{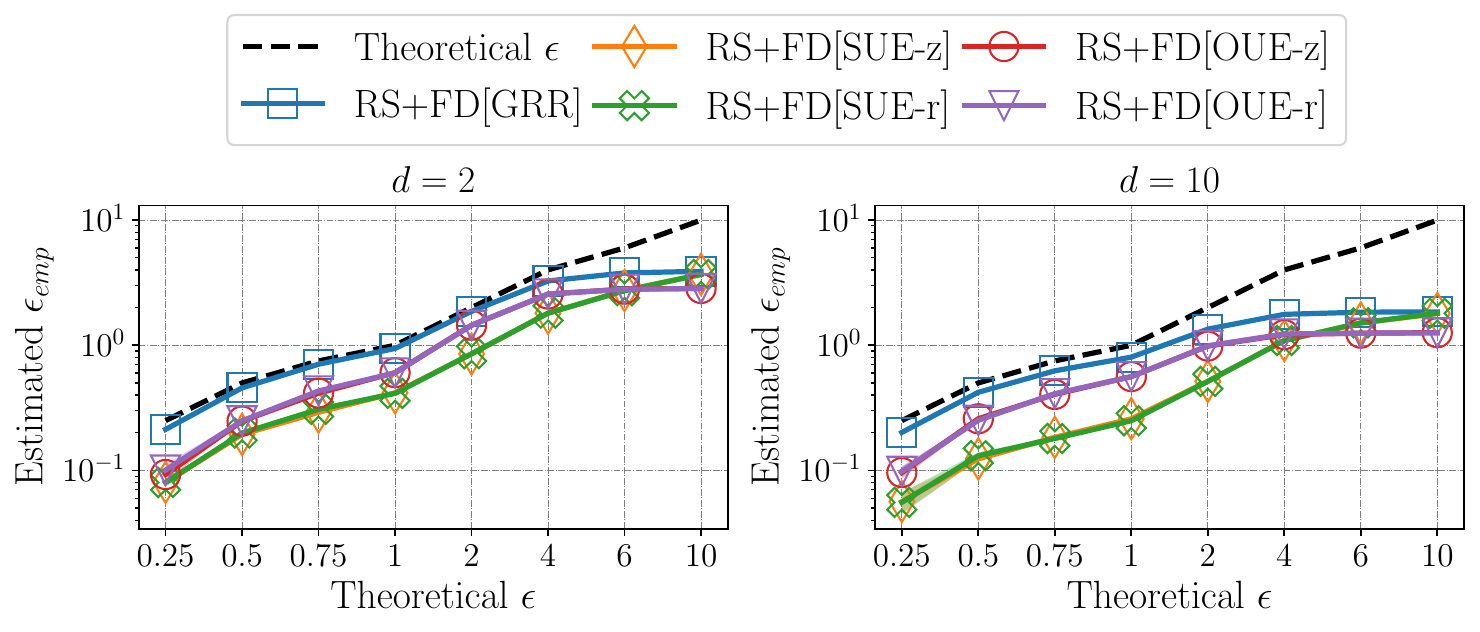}
        \caption{Domain size $k=50$.}
        \label{fig:audit_rs+fd_k50}
    \end{subfigure}
    \caption{Theoretical $\epsilon$ (x-axis) versus estimated $\epslb$ (y-axis) using our LDP-Auditor framework comparing different number of attributes $d$ for five RS+FD~\cite{Arcolezi_rs_fd} protocols with domain sizes $k=25$ and $k=50$.}
    \label{fig:add_exp_audit_rs+fd}
\end{figure*}

\end{document}